\newcommand{\identity}{\raisebox{-7pt}{\includegraphics[height=18pt]{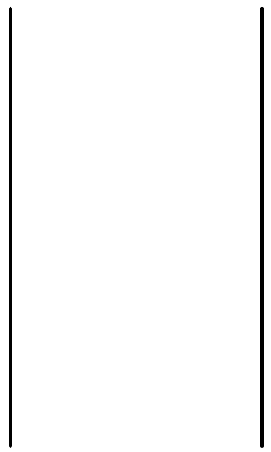}}}
\newcommand{\exchange}{\raisebox{-7pt}{\includegraphics[height=18pt]{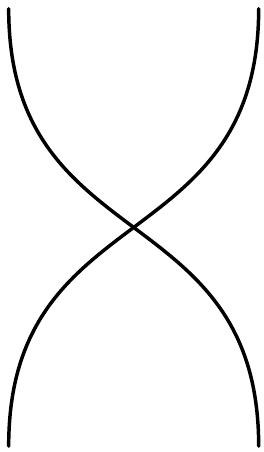}}}
\newcommand{\cupcap}{\raisebox{-7pt}{\includegraphics[height=18pt]{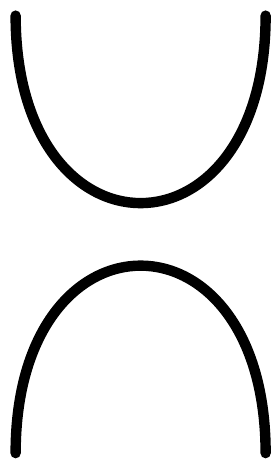}}}
\newcommand{\cupcapnest}{\raisebox{-7pt}{\includegraphics[height=18pt]{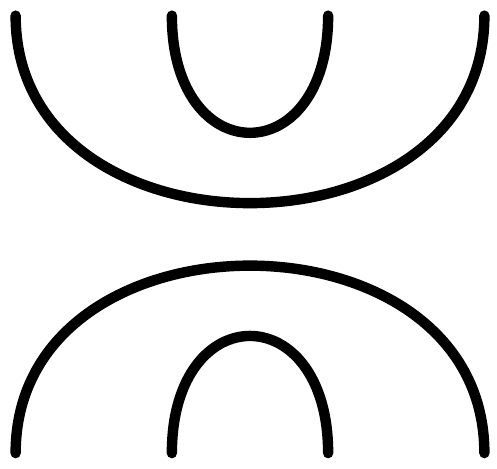}}}
\newcommand{\cupcapcross}{\raisebox{-7pt}{\includegraphics[height=18pt]{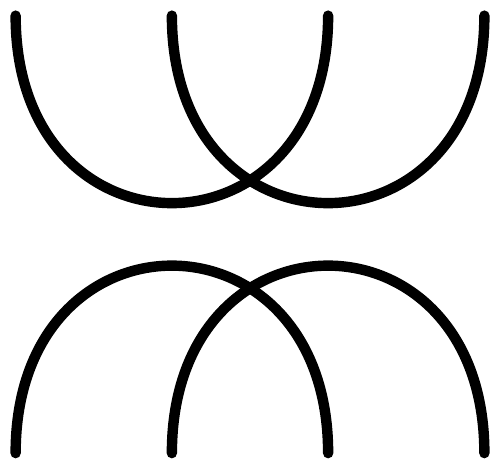}}}
\newcommand{\cupcapmixtop}{\raisebox{-7pt}{\includegraphics[height=18pt]{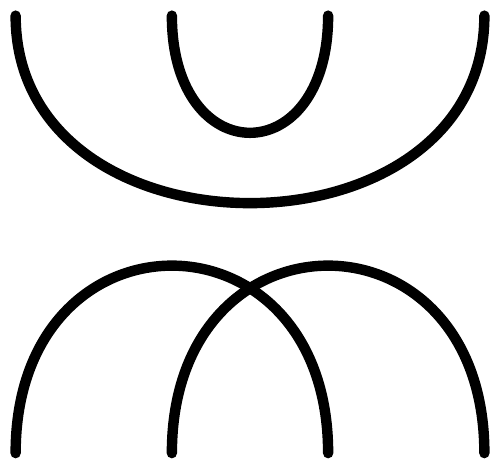}}}
\newcommand{\cupcapmixbottom}{\raisebox{-7pt}{\includegraphics[height=18pt]{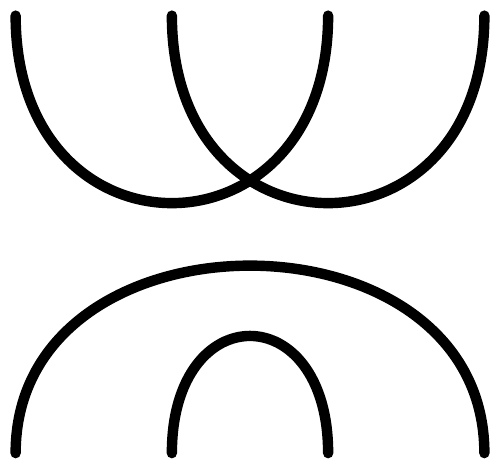}}}
\newcommand{\Xsym}{X_s}
\newcommand{\Xunsym}{X}
\newcommand{\Xfrobsym}{X_{{\rm Frob},s}}
\newcommand{\Xfrobunsym}{X_{\rm Frob}}
\newcommand{\inner}[2]{\left\langle #1, #2 \right\rangle}
\newtheorem{theorem}{Theorem}
\newtheorem*{theorem*}{Theorem}
\newtheorem{lemma}[theorem]{Lemma}
\newtheorem{conjecture}{Conjecture}
\DeclareMathOperator{\Exp}{\mathbb{E}}
\DeclareMathOperator{\Var}{\textsf{Var}}
\DeclareMathOperator{\tr}{\textbf{tr}}
\DeclareMathOperator{\sdet}{\textrm{sdet}}
\DeclareMathOperator{\perm}{\textrm{perm}}
\renewcommand{\vec}[1]{\mathbf{#1}}
\newcommand{\abs}[1]{\left| #1 \right|}
\newcommand{\norm}[1]{\left\| #1 \right\|}
\newcommand{\e}{{\rm e}}
\newcommand{\eps}{\epsilon}
\newcommand{\poly}{{\rm poly}}
\newcommand{\sym}{+}
\newcommand{\asym}{-}
\newcommand{\one}{{\mathds 1}}
\renewcommand{\vec}[1]{\mathbf{#1}}
\newcommand{\remove}[1]{}
\newcommand{\C}{\mathbb{C}}
\newcommand{\R}{\mathbb{R}}
\newcommand{\U}{\textsf{U}}
\newcommand{\GL}{\textsf{GL}}
\newcommand{\alg}{\mathcal{A}}
\title{Approximating the Permanent via Nonabelian Determinants}
\author{Cristopher Moore\thanks{\textsf{moore@santafe.edu}, Department of Computer Science, University of New Mexico and Santa Fe Institute} \and Alexander Russell\thanks{\textsf{acr@cse.uconn.edu}, Department of Computer Science and Engineering, University of Connecticut}}
\begin{document}
\maketitle

\begin{abstract}
Since the celebrated work of Jerrum, Sinclair, and Vigoda, we have known that the permanent of a $\{0,1\}$ matrix can be approximated in randomized polynomial time by using a rapidly mixing Markov chain to sample perfect matchings of a bipartite graph.  A separate strand of the literature has pursued the possibility of an alternate, \emph{algebraic} polynomial-time approximation scheme.  These schemes work by replacing each 1 with a random element of an algebra $\alg$, and considering the determinant of the resulting matrix.  

In the case where $\alg$ is noncommutative, this determinant can be defined in several ways.  We show that for estimators based on the conventional determinant, the critical ratio of the second moment to the square of the first---and therefore the number of trials we need to obtain a good estimate of the permanent---is $(1 + O(1/d))^n$ when $\alg$ is the algebra of $d \times d$ matrices. These results can be extended to group algebras, and semi-simple algebras in general.

We also study the \emph{symmetrized} determinant of Barvinok, showing that the resulting estimator has small variance when $d$ is large enough. However, if $d$ is constant---the only case in which an efficient algorithm is known---we show that the critical ratio exceeds $2^{n} / n^{O(d)}$. Thus our results do not provide a new polynomial-time approximation scheme for the permanent.  Indeed, they suggest that the algebraic approach to approximating the permanent faces significant obstacles.

We obtain these results using diagrammatic techniques in which we express matrix products as contractions of tensor products.  When these matrices are random, in either the Haar measure or the Gaussian measure,  we can evaluate the trace of these products in terms of the cycle structure of a suitably random permutation.  In the symmetrized case, our estimates are then derived by a connection with the character theory of the symmetric group.
\end{abstract}

\section{Introduction}
\label{sec:intro}

The \emph{permanent} of an $n \times n$ matrix $A$ is $\perm A = \sum_{\pi \in S_n} \prod_{i=1}^n A_{i,\pi i}$, 
where $S_n$ denotes the group of permutations of $n$ objects.  If $A_{ij} \in \{0,1\}$ for all $i,j$, we can also write
$\perm A = \abs{ \{ \pi \in S_n \mid \pi \vdash A \} }$
where $\pi \vdash A$ denotes the following relation, 
\[
\pi \vdash A \Leftrightarrow A_{i,\pi i}=1 \text{ for all $i$} \, .
\]
Computing the permanent of a $\{0,1\}$ matrix is $\textsl{\#P}$-complete.  Therefore, we cannot expect to compute it efficiently without startling complexity-theoretic consequences, including the collapse of the polynomial hierarchy~\cite{Valiant1979,Toda}. 

On the other hand, Godsil and Gutman~\cite{godsil-gutman} pointed out the following charming fact.  If we define the matrix-valued random variable $M$ so that  $M_{ij} = \rho_{ij} A_{ij}$,  where the $\rho_{ij}$ are chosen independently and uniformly from $\{\pm1\}$, and define $X = (\det M)^2$, then it is easy to check that $X$ is an estimator for $\perm A$, which is to say that $\Exp[ X ] = \perm A$.  Since $\det M$ can be computed efficiently, so can $X$.  This suggests a natural randomized approximation algorithm for the permanent: average a family of independent samples of $X$.

The quality of this approximation can be controlled by determining the variance of $X$.  If $X_t$ denotes the average of $X$ over $t$ independent trials, then Chebyshev's inequality shows that, in order for $X_t$ to yield an approximation of $\Exp[X]$ within a factor $\alpha=O(1)$ with probability $\Omega(1)$, the number of trials we need is at most
\[
t \sim \frac{\Var X}{\Exp[X]^2} \le \frac{\Exp[X^2]}{\Exp[X]^2} \, .
\]
Following~\cite{chien-rasmussen-sinclair}, we refer to this quantity as the \emph{critical ratio} of the estimator. Karmarkar, Karp, Lipton, Lov{\'a}sz, and Luby~\cite{karmarkar-etal} showed, unfortunately, that the critical ratio is $3^{n/2}$ in the worst case, ignoring $\poly(n)$ factors.  Then again, they showed that we can decrease this to $2^{n/2}$ by drawing $\rho_{ij}$ uniformly from the unit circle in the complex plane, or simply from the cube roots of unity, instead of $\{\pm1\}$.  Barvinok~\cite{barvinok99} obtained a more concentrated estimator by drawing $\rho_{ij}$ from normal distributions over $\R$, $\C$, and the quaternions $\mathbb{H}$.

This raises the interesting possibility that, by choosing the $\rho_{ij}$ from the right set of algebraic objects, we might be able to reduce the critical ratio to $\e^{o(n)}$, or even to $\poly(n)$, resulting in a subexponential or polynomial-time algorithm.  One exciting result in this direction is due to Chien, Rasmussen, and Sinclair~\cite{chien-rasmussen-sinclair}, who showed that certain determinants defined over the Clifford algebra $\textsc{CL}_k$ with $k$ generators give estimators where the critical ratio is $(1+O(2^{-k/2}))^{n/2}$.  In the case of the quaternion group, where $k=3$, they gave a polynomial-time algorithm for a type of determinant where the critical ratio grows as $(3/2)^{n/2}$.  This is currently the best known critical ratio for an algebraic estimator which can be computed efficiently.  Sadly, however, for larger $k$ we do not know how to compute these determinants in polynomial time.

These results can be given a uniform presentation by defining a notion of determinant for a matrix $M$ over an associative algebra $\alg$.  The traditional Cayley determinant is then
\begin{equation}
\label{eq:det}
\det M = \sum_{\alpha \in S_n} (-1)^{\alpha} \prod_{i=1}^n M_{i, \alpha i} \, ,
\end{equation}
where $(-1)^\alpha$ denotes the sign of the permutation $\alpha$. Note that $\det M$ takes values in $\alg$. If $\alg$ is noncommutative, however, the determinant as defined in~\eqref{eq:det} may depend on the order in which the product is taken.  As written, each traversal $M_{i, \alpha i}$ is ordered from the top row to the bottom row; we could just as easily order them from the left column to the right.  This introduces some arbitrariness to the definition, and appears to complicate the problem of computing such determinants, even when the algebra $\alg$ has small dimension~\cite{Nisan91}. 

One natural remedy is to remove this order dependence by forcibly symmetrizing each product appearing in~\eqref{eq:det}.  This gives the following \emph{symmetrized determinant},
\begin{equation}
\label{eq:sdet}
\sdet M = \frac{1}{n!} \sum_{\alpha,\alpha' \in S_n} (-1)^{\alpha' \alpha^{-1}} \prod_{i=1}^n M_{\alpha i, \alpha' i} \, .
\end{equation}
Observe that $\sdet$ is obtained by symmetrizing each product appearing in~\eqref{eq:det}.  This definition is due to Barvinok~\cite{barvinok-sdet}, who showed that if $\alg$ has dimension $m$, the symmetrized determinant can be computed in time $O(n^{m+O(1)})$.  In contrast, no efficient algorithm is currently known for the unsymmetrized Cayley determinant~\eqref{eq:det}, even when the dimension of $\alg$ is constant.

We focus on the algebra $\alg_d$, consisting of all $d \times d$ matrices over $\C$.  We remark that any finite dimensional $C^*$-algebra,\footnote{A $C^*$-algebra is an algebra over $\R$ or $\C$ possessing a norm $\| \cdot \|$ and an involution operator $\cdot^*$ consistent in the sense that $\| x^* x\|^2 = \| x\|^2.$ See, e.g., \cite[\S1]{Conway:Operator} for a complete definition.} which appear to be the natural settings for such approximations, are \emph{semi-simple}, meaning that they can be decomposed as a direct product of algebras of the form $\alg_d$. In particular, all group algebras and the Clifford algebras studied in~\cite{chien-rasmussen-sinclair} have this property. It follows that many of our results, especially lower bounds on the critical ratio, carry over easily to estimators based on suitable distributions in semisimple algebras.

Now, given a matrix $A$ with entries in $\{0, 1\}$, define $M_{ij} = \rho_{ij} A_{ij}$, where the $\rho_{ij}$ are independently random $d \times d$ matrices.  (We focus on $\{0,1\}$ matrices, but we can let the $A_{ij}$ be arbitrary nonnegative reals by taking $M_{ij} = \rho_{ij} \sqrt{A_{ij}}$.)  Since the $M_{ij}$ take values in $\alg_d$, then so do $\det M$ and $\sdet M$.  There are several ways to turn these matrix-valued determinants into real-valued estimators for the permanent of a real-valued matrix $A$.  As mentioned above, most of the existing literature has focused on the Frobenius norm of these determinants. 
%, $\norm{\det M}^2 = \tr (\det M) (\det M)^\dagger$.  
For technical reasons, we focus first on the absolute value squared of their trace.  This gives us two estimators, 
\[
\Xunsym = \abs{ \tr \det M }^2 \quad \text{and} \quad \Xsym = \abs{ \tr \sdet M }^2 \, . 
\]
Note that these are random $\R$-valued variables depending on the $\rho_{ij}$.  We will then address the Frobenius estimators, 
\[
\Xfrobunsym = \norm{ \det M }^2 \quad \text{and} \quad \Xfrobsym = \norm{ \sdet M }^2 \, . 
\]
As an additional degree of freedom, we can draw $\rho_{ij}$ according to two different distributions on $\alg_d$.  The \emph{Haar measure} is the uniform distribution over unitary matrices.  In the \emph{Gaussian measure}, each entry of $\rho_{ij}$ is drawn independently from the Gaussian distribution on $\C$ with mean $0$ and variance $1/d$: that is, its real and imaginary parts are drawn independently from the Gaussian distribution on $\R$ with mean $0$ and variance $1/(2d)$, $p(x) = \e^{-x^2} / \sqrt{\pi}$.

Our main contribution is given by the following theorems.

\begin{theorem}
\label{thm:main}
For both the Haar and Gaussian measures, in the unsymmetrized case we have
\begin{equation}
\label{eq:unsym}
\frac{\Exp[\Xunsym^2]}{\Exp[\Xunsym]^2} = \left( 1+O\!\left( \frac{1}{d} \right) \right)^n \, . 
\end{equation}
In the symmetrized case,
\begin{equation}
\label{eq:sym-constant}
\frac{\Exp[\Xsym^2]}{\Exp[\Xsym]^2} \leq 2^{2n} \,n^{-d+O(1)} \mbox{ if $d=O(1)$}
\end{equation}
and more generally, 
\begin{equation}
\label{eq:sym-growing}
\frac{\Exp[\Xsym^2]}{\Exp[\Xsym]^2} = O\!\left(\e^{4n^2/d}\right) \, . 
\end{equation}
\end{theorem}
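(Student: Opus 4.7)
The plan is to convert each trace of a matrix product into a tensor-network contraction, average over the $\rho_{ij}$ using Wick's theorem (Gaussian) or the Weingarten formula (Haar), and bound the resulting combinatorial/character sums. Concretely, for any $\alpha \in S_n$,
\[
\tr\!\prod_i M_{i,\alpha i} = \tr_{(\C^d)^{\otimes n}}\!\bigl((M_{1,\alpha 1} \otimes \cdots \otimes M_{n,\alpha n})\,P_\gamma\bigr),
\]
where $P_\gamma$ permutes tensor factors by $\gamma = (1\,2\,\cdots n)$. Thus $\Xunsym^{\,k}$ and $\Xsym^{\,k}$ become traces on $(\C^d)^{\otimes 2kn}$ of operators built from $2k$ copies of $P_\gamma$ together with signed permutations (and an extra $S_n$-symmetrization in the $\sdet$ case). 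Because the $\rho_{ij}$ are independent across $(i,j)$, averaging factorizes row by row: for each $(i,j)$ one sums over Wick (resp.\ Weingarten) pairings between left and right copies of $\rho_{ij}$, and each pairing evaluates diagrammatically to a power $d^{c(\pi)-\ell}$, where $c(\pi)$ is the number of cycles of a permutation $\pi$ built from the pairing together with the surrounding trace frame.

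\textbf{Unsymmetrized case.} In $\Exp[\Xunsym]$, only the diagonal pairings (forcing $\alpha = \beta$ row by row) survive, the signs cancel, and the result is $\perm A$. In $\Exp[\Xunsym^2]$, summing over four permutations $(\alpha_1,\alpha_2,\beta_1,\beta_2)$ and row-by-row pairings, the ``disconnected'' contributions (where the row pairings match the left factors to the right factors in one of two trivial ways) reproduce $\Exp[\Xunsym]^2$, while each nontrivial ``mixing'' at a row introduces a cycle merge worth a factor $1 + O(1/d)$ in the tensor diagram. Since these corrections factor across the $n$ rows, the ratio is at most $(1 + O(1/d))^n$, yielding \eqref{eq:unsym}. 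The Haar case differs only by $O(1/d^2)$ Weingarten corrections to each pairing weight.

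\textbf{Symmetrized case.} Reindex the $(\alpha,\alpha')$-sum in \eqref{eq:sdet} by $\beta = \alpha'\alpha^{-1}$; the average over $\alpha$ then replaces $P_\gamma$ by $(1/n!)\sum_\alpha P_{\alpha\gamma\alpha^{-1}}$, a central element which, via Schur--Weyl duality, decomposes on $(\C^d)^{\otimes n}$ as a sum of isotypic projectors indexed by partitions $\lambda \vdash n$ with at most $d$ parts, weighted by normalized characters $\chi^\lambda(\gamma)/\dim S_\lambda$. The $(-1)^\beta$ weighting then reduces the moments of $\Xsym$ to sums of squared Schur polynomial evaluations $s_\lambda(1^d) = \dim V_\lambda$ against $S_n$-character inner products. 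For $d = O(1)$ only $O(n^{d-1})$ partitions contribute and $\dim V_\lambda \leq n^{O(d)}$; combined with the $2^{2n}$ factor coming from the worst-case sign-character sum over $S_n$, this gives \eqref{eq:sym-constant}. For growing $d$, a cruder Cauchy--Schwarz estimate on the same character sums, together with standard bounds on Schur polynomials (such as $s_\lambda(1^d) \leq \binom{n+d-1}{n}$) and $\binom{n+d-1}{n} \leq \e^{n^2/d}$ when $d \gg n$, yields the $\e^{4n^2/d}$ bound in \eqref{eq:sym-growing}.

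\textbf{Main obstacle.} The principal technical difficulty is the character-theoretic analysis in the symmetrized case: the $\lambda$-sum must be bounded sharply enough to extract the $n^{-d+O(1)}$ savings against the exponential $2^{2n}$ arising from the squared sign-weighted $\beta$-sum, and this requires identifying which $\mathrm{GL}_d$-isotypic components dominate each of $\Exp[\Xsym]$ and $\Exp[\Xsym^2]$ and carefully accounting for the cancellation produced by the sign character. The unsymmetrized case, by contrast, reduces cleanly to a row-by-row factorization with no genuine character-theoretic subtlety.
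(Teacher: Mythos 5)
Your overall architecture---tensor-network contractions, Wick/Weingarten averaging, and a reduction to character sums---is the same as the paper's, and your unsymmetrized argument is essentially the paper's proof in sketch form: the paper organizes the surviving terms by double cycle covers $C$, compares each coloring of $C$ to a ``pure'' one (which enumerates $(\perm A)^2 = \sum_C 2^{t(C)}$), and shows the worst case is $n$ isolated edges, where the $2^n$ colorings contribute $(1+1/d)^n+(1-1/d)^n$ times the pure term. Your ``corrections factor across rows'' heuristic is slightly off (the loop count depends on adjacent entries of a cyclic string, not independently on each row), but the resulting bound is the same and the gap is cosmetic.

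The symmetrized case is where your proposal has genuine gaps. First, you attribute the $2^{2n}$ to ``the worst-case sign-character sum over $S_n$'' and describe the main difficulty as controlling ``cancellation produced by the sign character.'' This is a misdiagnosis: the paper proves (Lemma~\ref{lem:parity}) that every tuple $(\kappa,\lambda,\mu,\nu)$ contributing to the second moment has $(-1)^{\kappa\lambda\mu\nu}=+1$, so there is no sign cancellation anywhere in $\Exp[\Xsym^2]$; all terms are positive. The $2^{2n}$ is purely combinatorial: it is $\binom{2n}{n}$, arising from the number of ways the symmetrizing permutations shuffle matrices between the two halves of an interleaved product of $2n$ matrices. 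Second, your Schur--Weyl decomposition lives on $(\C^d)^{\otimes n}$ and produces partitions of $n$, which suffices for the first moment (where the $n$-cycle forces hook shapes), but the second moment genuinely lives on $S_{2n}$: the relevant object is the convolution square of the conjugacy class of $(r,r)$ (two disjoint $n$-cycles), and the decisive structural fact---which your sketch never reaches---is that the Murnaghan--Nakayama rule kills $\chi_\tau(n,n)$ unless $\tau \vdash 2n$ is a union of two $n$-ribbons, leaving only $O(n^2)$ partitions each with $|\chi_\tau(n,n)|\le 2$. Without this, ``a cruder Cauchy--Schwarz estimate on the same character sums'' does not produce either $2^{2n}n^{-d+O(1)}$ or $\e^{4n^2/d}$. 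Third, converting the fourfold average $\Exp_{\alpha,\beta,\gamma,\delta}$ into an inner product of class functions on $S_{2n}$ is itself nontrivial: the paper must (a) decouple the isolated edges where $\rho$ appears four times via the fourth-moment Lemma~\ref{lem:fourth}, at a cost of $(1+O(1/d))^n$, and (b) reweight the resulting $k$-sum from $\binom{n}{k}$ to $\binom{n}{k}^2$ so that $(\gamma,\delta)w_k(\alpha,\beta)$ becomes uniform on $S_{2n}$ (the sandwich $\tilde a_d^{(2)}/\binom{n}{n/2}\le a_d^{(2)}\le\tilde a_d^{(2)}$). These steps are where the actual work lies, and your proposal does not anticipate any of them.
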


Additionally, we establish lower bounds on the critical ratio $\Exp[\Xsym^2]/\Exp[\Xsym]^2$.
\begin{theorem}
\label{thm:lower}
Let $A$ be the $n \times n$ identity matrix and $d$ a constant.  Then
\begin{equation*}
\frac{\Exp[\Xsym^2]}{\Exp[\Xsym]^2} = \Omega\!\left(\frac{2^n}{n^d}\right) 
\quad \text{and} \quad 
\frac{\Exp[\Xsym^2]}{\Exp[\Xsym]^2} = \left(1 - O\!\left(\frac{1}{d} \right) \right)^n \Omega\!\left(\frac{2^n}{n^d}\right)\,,
\end{equation*}
when the $\rho_{ij}$ are distributed according to the Gaussian or Haar measure respectively.
\end{theorem}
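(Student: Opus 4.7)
The plan is to specialize to $A = I_n$ and express both moments as character sums via Schur--Weyl duality. When $A = I$, the product $\prod_i M_{\alpha i, \alpha' i}$ vanishes unless $\alpha = \alpha'$, so $\sdet M = \frac{1}{n!}\sum_\alpha \rho_{\alpha(1)}\cdots\rho_{\alpha(n)}$, where $\rho_j := \rho_{jj}$. Cyclic invariance of the trace, together with the fact that $\alpha \mapsto \alpha(1\,2\,\cdots\,n)\alpha^{-1}$ covers each $n$-cycle $n$ times, gives
$T := \tr \sdet M = \tfrac{1}{(n-1)!}\tr_{V^{\otimes n}}\bigl((\rho_1 \otimes \cdots \otimes \rho_n)\,C_n\bigr),$
where $C_n \in \C[S_n]$ is the class sum of all $n$-cycles acting on $V^{\otimes n}$ by permuting tensor factors.

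For Gaussian $\rho_j$, $\Exp[\rho_j \otimes \rho_j^*] = d^{-1}|\Omega\rangle\langle\Omega|$ with $|\Omega\rangle = \sum_a |a\rangle \otimes |a\rangle$. Independence and the centrality of $C_n$ then reduce $\Exp[\Xsym] = \Exp[|T|^2]$ to $((n-1)!)^{-2} d^{-n}\tr_{V^{\otimes n}}(C_n^2)$. Applying Schur--Weyl, $V^{\otimes n} = \bigoplus_{\lambda \vdash n,\, \ell(\lambda) \le d} W_\lambda \otimes U_\lambda$, with $C_n$ acting on $U_\lambda$ as the scalar $\omega_\lambda = (n-1)!\,\chi_\lambda(c)/\dim U_\lambda$; Murnaghan--Nakayama forces $\omega_\lambda = 0$ unless $\lambda$ is a hook $(n-k, 1^k)$, so $\Exp[\Xsym]$ collapses to a sum over the $d$ hook partitions with $k < d$. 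Similarly, $\Exp[\Xsym^2] = \Exp[|T|^4]$ expands, via Wick contraction of the pairs $(\rho_j, \rho_j, \rho_j^*, \rho_j^*)$, into a sum indexed by the $2^n$ subsets $S \subseteq [n]$ of ``crossed'' blocks, each contributing a nonnegative trace on $V^{\otimes 4n}$ built from four copies of $C_n$ and a specific pattern of maximally-entangled contractions.

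For the lower bound, observe that the empty $S = \emptyset$ (all parallel) contributes exactly $\Exp[\Xsym]^2$, and the remaining contributions are nonnegative. Applying Schur--Weyl on $V^{\otimes 4n}$ to the full $2^n$-fold sum expresses it as a character sum over pairs of hook partitions with $\ell(\lambda) \le d$; evaluating produces the $2^n$ growth from enumeration over crossing patterns and the polynomial denominator $n^d$ from the length-restriction dimension cap of $O(n^d)$ (the hook dimensions $\binom{n-1}{k}$ range only over $k < d$). For the Haar case, one replaces the Gaussian expectation $E$ by its Weingarten counterpart; the Weingarten formula differs from the Gaussian one by a factor $(1 - O(1/d))$ per unitary, and propagating across the $n$ factors yields the stated $(1 - O(1/d))^n$ correction.

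The hard part will be the third step: accurately evaluating the full $2^n$-sum and showing that the length restriction $\ell(\lambda) \le d$ causes only a polynomial (rather than exponential) loss in $n$. This requires tracking, as the crossing pattern $S$ varies, how the four copies of $C_n$ couple through the isotypic decomposition of $V^{\otimes 4n}$, and verifying that hook representations of $S_{2n}$ with long first rows carry enough weight to yield the $2^n/n^d$ scaling rather than being dominated by a small subset of pairings.
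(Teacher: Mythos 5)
Your setup mirrors the paper's: for $A=I$ the only surviving term has all four permutations equal to the identity, the fourth moment of each $\rho_{jj}$ decouples (Lemma~\ref{lem:fourth}) into $2^n$ crossing patterns indexed by subsets $S\subseteq[n]$, each pattern contributes a nonnegative quantity of the form $d^{-2n}\Exp\, d^{c(\cdot)}$, and the evaluation should proceed through characters constrained by Murnaghan--Nakayama and by the $\ell(\lambda)\le d$ cap from the Kostka numbers. The Haar correction via a $(1-O(1/d))$ factor per fourth-moment operator is also the right move. The problem is that the entire content of the theorem sits in the step you defer as ``the hard part'': as written, your argument establishes only the trivial bound $\Exp[\Xsym^2]\ge\Exp[\Xsym]^2$ coming from the $S=\emptyset$ term plus nonnegativity. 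The quantitative claim $\Omega(2^n/n^d)$ is not proved.

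The missing idea is the device that makes the $2^n$-fold sum tractable. Grouping the crossing patterns by $k=|S|$ gives $a^{(2)}_d=\sum_k\binom{n}{k}d^{-2n}\Exp_{\alpha,\beta,\gamma,\delta}\,d^{c\left((r^{-1},r^{-1})^{(\alpha,\beta)}w_k(r,r)^{(\gamma,\delta)}w_k\right)}$, which is \emph{not} directly an average of a class function over $S_{2n}$, so Schur--Weyl on $V^{\otimes 4n}$ does not immediately apply. The paper's trick is to reweight the $k$th term by an extra factor $\binom{n}{k}$ --- losing at most $\binom{n}{n/2}$ overall, since every term is positive --- after which $\binom{n}{k}^2/\binom{2n}{n}$ is exactly the probability that a uniform $\pi\in S_{2n}$ sends $k$ elements of $\{1,\dots,n\}$ into $\{n+1,\dots,2n\}$, and the reweighted sum collapses to $\binom{2n}{n}(2n)!\,d^{-2n}\langle P_{n,n}*P_{n,n},\,d^{c(\cdot)}\rangle$. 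The lower bound then requires only the trivial-character term of that inner product, namely $\binom{2n+d-1}{2n}/(2n)!$, and the final ratio is $\bigl(\binom{2n}{n}/\binom{n}{n/2}\bigr)\cdot\binom{2n+d-1}{2n}/\binom{n+d}{n+1}^2=\Omega(2^n/n^d)$: the exponential $2^n$ comes from $\binom{2n}{n}/\binom{n}{n/2}$, i.e.\ from the cost of the reweighting, and the polynomial loss comes from comparing $\binom{2n+d-1}{2n}\approx n^{d-1}$ against $a_d^2\sim n^{2(d-1)}d^{-2n}$ --- not, as you suggest, from the hook dimensions $\binom{n-1}{k}$ with $k<d$. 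Without some substitute for this reweighting-plus-trivial-character argument, ``evaluating the full $2^n$-sum'' over four coupled class sums is precisely the computation that remains undone, and your length-restriction heuristic does not by itself certify that the loss is only polynomial in $n$.
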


Finally, we show the critical ratio differs by at most $d^4$ for the Frobenius estimators than for those given by the square of the trace:
\begin{theorem}
\label{thm:frob}
\begin{equation}
\label{eq:frob}
\frac{1}{d^4} \,\frac{\Exp[\Xunsym^2]}{\Exp[\Xunsym]^2}
\le \frac{\Exp[\Xfrobunsym^2]}{\Exp[\Xfrobunsym]^2}
\le d^4 \,\frac{\Exp[\Xunsym^2]}{\Exp[\Xunsym]^2}
%\quad \text{and} \quad
%\frac{\Exp[\Xfrobsym^2]}{\Exp[\Xfrobsym]^2}
%\le d^4 \,\frac{\Exp[\Xsym^2]}{\Exp[\Xsym]^2}
\, ,
\end{equation}
and similarly for $\Xfrobsym$.
\end{theorem}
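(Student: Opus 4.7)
I will exploit two complementary ingredients: the pointwise bound $|\tr N|^2 \le d\|N\|^2$, from Cauchy--Schwarz applied to $\tr N = \langle I, N\rangle$ in the Frobenius inner product, and the distributional invariance of $N = \det M$ under either the Haar or Gaussian measure on the $\rho_{ij}$'s. The easy direction is immediate: $\Xunsym \le d\,\Xfrobunsym$ pointwise gives $\Exp[\Xunsym]\le d\,\Exp[\Xfrobunsym]$ and $\Exp[\Xunsym^2]\le d^2\,\Exp[\Xfrobunsym^2]$, and similarly for the symmetrized variants.

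The nontrivial direction rests on establishing that $\det M$ is \emph{bi-unitarily invariant}: the joint law of $\{\rho_{ij}\}$ is preserved under $\rho_{ij}\mapsto U_i\rho_{ij}U_{i+1}^{-1}$ for arbitrary unitaries $U_1,\dots,U_{n+1}$, and this substitution telescopes each ordered product $\prod_i M_{i,\alpha i}$ to $U_1\bigl(\prod_i M_{i,\alpha i}\bigr)U_{n+1}^{-1}$; consequently $\det M\stackrel{d}{=}V(\det M)W$ for any fixed unitaries $V, W$. Averaging over an auxiliary pair of independent Haar unitaries $Z_1,Z_2$ yields the pointwise identity $\Exp_{Z_1,Z_2}[|\tr(Z_1 N Z_2)|^2]=\|N\|^2/d$, so that bi-unitary invariance gives $\Exp[\Xfrobunsym]=d\,\Exp[\Xunsym]$ exactly. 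A parallel degree-$(2,2)$ Weingarten computation at the fourth-moment level produces $\Exp[\Xfrobunsym^2]\le \frac{d(d+1)}{2}\Exp[\Xunsym^2]\le d^2\Exp[\Xunsym^2]$. Combined with the pointwise bounds, these two moment relations deliver the $d^4$-inequality in both directions via
\[
\frac{\Exp[\Xunsym^2]}{\Exp[\Xunsym]^2}\le\frac{d^2\,\Exp[\Xfrobunsym^2]}{(\Exp[\Xfrobunsym]/d)^2}=d^4\,\frac{\Exp[\Xfrobunsym^2]}{\Exp[\Xfrobunsym]^2},
\]
and symmetrically for the reverse inequality.

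The main obstacle is the symmetrized case. The products defining $\sdet M$ do not telescope under $\rho_{ij}\mapsto U_i\rho_{ij}U_{i+1}^{-1}$, so only unitary conjugation invariance $\sdet M\stackrel{d}{=}U(\sdet M)U^{-1}$ survives; this alone is too weak to pin down $\Exp[\|N\|^2]/\Exp[|\tr N|^2]$, since both quantities are invariant under conjugation of $N$. My plan is to write $\sdet M=\frac{1}{n!}\sum_\alpha\det_\alpha M$ as an average over orderings $\alpha$ of Cayley determinants, each of which is \emph{individually} bi-unitarily invariant by the same telescoping argument. The diagonal ($\alpha=\beta$) contributions to the expansions of $\Exp[\Xfrobsym]$ and $\Exp[\Xsym]$ then inherit the clean moment relation from the unsymmetrized case, while the off-diagonal cross-terms $\Exp\langle\det_\alpha M,\det_\beta M\rangle$ and $\Exp[\tr\det_\alpha M\,\overline{\tr\det_\beta M}]$ are handled by the diagrammatic/Weingarten machinery already developed in support of Theorem~\ref{thm:main}. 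The loose $d^4$ factor in the conclusion is precisely what permits the cross-term bounds to be only constant-factor tight rather than exact identities.
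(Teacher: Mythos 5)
Your treatment of the unsymmetrized case is correct and takes a genuinely different route from the paper. The paper never invokes distributional invariance of $\det M$; instead it observes that every contributing term in the permutation expansion of each moment evaluates diagrammatically to $d^{c-\mathrm{const}}$ with $c$ the number of loops, and that rewiring the closure wires to convert $\abs{\tr Q}^2$ into $\norm{Q}^2$ changes $c$ by at most $1$ per swap ($1$ swap for the first moment, $2$ for the second), giving $d^{\pm 1}$ and $d^{\pm 2}$ respectively and hence $d^4$ overall. Your bi-unitary-invariance argument --- the telescoping substitution $\rho_{ij}\mapsto U_i\rho_{ij}U_{i+1}^{-1}$, the exact identity $\Exp[\Xfrobunsym]=d\,\Exp[\Xunsym]$, and the degree-$(2,2)$ Weingarten bound $\Exp[\Xfrobunsym^2]\le\tfrac{d(d+1)}{2}\Exp[\Xunsym^2]$ --- is sound for both the Haar and Ginibre/Gaussian laws and in fact yields sharper constants than $d^4$ in this case. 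What it buys is an exact first-moment relation; what it costs is that it leans entirely on the telescoping, which is special to the row-ordered Cayley product.

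The symmetrized half of the theorem, however, is not proven by your proposal, and that is a genuine gap rather than a detail. You correctly identify that the products in $\sdet M$ do not telescope under a single substitution, but your remedy --- splitting $\sdet M=\frac{1}{n!}\sum_\alpha\det_\alpha M$ and treating diagonal and off-diagonal terms separately --- does not close the argument: the cross-terms $\Exp\bigl\langle\det_\alpha M,\det_\beta M\bigr\rangle$ versus $\Exp\bigl[\tr\det_\alpha M\,\overline{\tr\det_\beta M}\bigr]$ are exactly where all the content lies (they are what produce the constant $a_d$ in $\Exp[\Xsym]$), and ``handled by the diagrammatic/Weingarten machinery already developed'' is a deferral, not a bound. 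To control them you would need precisely the paper's observation: each such cross-term is a nonnegative quantity of the form $d^{c-\mathrm{const}}$, and the trace-to-Frobenius rewiring alters $c$ by at most one per swap, so each term changes by a factor in $[d^{-1},d]$ (first moment) or $[d^{-2},d^2]$ (second moment); nonnegativity of every term is what lets these per-term bounds be summed. Once you have that argument it applies uniformly to all terms, diagonal or not, and in both the symmetrized and unsymmetrized expansions, so the decomposition into Cayley determinants does no work. As written, your proposal establishes the theorem for $\Xfrobunsym$ but not the ``and similarly for $\Xfrobsym$'' clause.
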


These results give a somewhat frustrating outlook.  The critical ratio for the unsymmetrized estimator behaves very well, becoming more and more mildly exponential as $d$ increases, much like the Clifford group estimator of~\cite{chien-rasmussen-sinclair}.  However, we do not know how to compute these estimators efficiently.  On the other hand, we can compute the symmetrized estimator if $d$ is constant~\cite{barvinok-sdet}, but our results show that its critical ratio does not decrease appreciably until $d$ is roughly $n^2$.

Barvinok~\cite{barvinok-sdet} suggested that the estimators $\Xfrobsym$ might become asymptotically concentrated when $d$ is large, but constant.  Specifically, he made the following conjecture (where we have weakened the lower bound, specialized to $\{0,1\}$ matrices, and changed the notation to fit our purposes):
\begin{conjecture}
\label{conj:barvinok1}
If $A$ is an $n \times n$ matrix with entries in $\{0,1\}$, let $M(A)$ be the matrix $M_{ij} = \rho_{ij} A_{ij}$, where each $\rho_{ij}$ is chosen independently from the Gaussian distribution on $\alg_d$.  Define $M(\one)$ similarly, where $M_{ij} = \rho_{ij} \delta_{ij}$.  Then there is a sequence of constants $\gamma_d$, where $\lim_{d \to \infty} \gamma_d=1$, such that for any $\eps > 0$, 
\[
\lim_{n \to \infty} \Pr\left[ (\gamma_d+\eps)^{-n} \perm A \le \frac{\norm{M(A)}^2}{\norm{M(\one)}^2} \le (\gamma_d+\eps)^n \perm A \right] = 1 \, . 
\]
\end{conjecture}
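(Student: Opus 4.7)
My plan is to split the conjecture into two ingredients: an exact first-moment calculation identifying $\Exp[\Xfrobsym(A)]$ with $\perm A$ times a normalizing constant $C_d$ independent of $A$, and a sharp multiplicative concentration showing that both $\Xfrobsym(A)/\Exp[\Xfrobsym(A)]$ and $\Xfrobsym(\one)/\Exp[\Xfrobsym(\one)]$ lie in $[(\gamma_d+\eps)^{-n},(\gamma_d+\eps)^n]$ with probability tending to $1$. Combining the two, the ratio $\Xfrobsym(A)/\Xfrobsym(\one)$ then concentrates around $\perm A$ within a comparable exponential factor (at the cost of doubling the base, which can be absorbed into $\gamma_d+\eps$).

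The mean calculation proceeds via the diagrammatic machinery developed for Theorem~\ref{thm:main}. Expand $\norm{\sdet M}^2 = \tr((\sdet M)(\sdet M)^*)$ as a sum indexed by quadruples $(\alpha,\alpha',\beta,\beta') \in S_n^4$; by independence and unitary invariance of the $\rho_{ij}$'s (Wick in the Gaussian case, Weingarten in the Haar case), only terms that pair each random factor with its conjugate survive, and the character theory of $S_n$ isolates $\perm A$ as the dominant combinatorial factor. For the upper tail in the concentration claim, Markov's inequality applied to $\Xfrobsym(A)/\Exp[\Xfrobsym(A)]$ at scale $t=(\gamma_d+\eps)^n$ already yields a failure probability of at most $(\gamma_d+\eps)^{-n}\to 0$.

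The main obstacle is the lower tail: the bound $\Exp[\Xsym^2]/\Exp[\Xsym]^2 = O(\e^{4n^2/d})$ from Theorem~\ref{thm:main} is far too weak for a Paley--Zygmund argument when $d$ is a constant and $n\to\infty$. To strengthen it I would compute the higher moments $\Exp[\Xfrobsym(A)^k]$ using the same diagrammatic techniques, aiming to show $\Exp[\Xfrobsym(A)^k] \le (\gamma_d')^{nk}\,\Exp[\Xfrobsym(A)]^k$ uniformly in $k$, from which Chebyshev at scale $k$ gives the required multiplicative concentration. An alternative would be to establish concentration of $\log \Xfrobsym(A)$ directly via Gaussian concentration, or via Lata\l a-type polynomial concentration (since $\Xfrobsym(A)$ is a polynomial of degree $2n$ in the matrix entries), after truncating the low-probability event that $\sdet M(A)$ is anomalously small due to near-cancellations among the $(n!)^2$ symmetrizing terms. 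Quantifying the rate $\gamma_d\to 1$ plausibly requires a free-probability perspective: as $d\to\infty$ the $\rho_{ij}$ become asymptotically free, and $\sdet M$ should converge to a deterministic quantity proportional to $\perm A$ with fluctuations of relative order $1/d$. Making this large-$d$ limit quantitative at finite $d$, and propagating the rate through the moment bounds to pin down $\gamma_d$, is where I expect the bulk of the technical difficulty to lie.
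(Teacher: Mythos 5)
The statement you are addressing is a \emph{conjecture}, and the paper does not prove it: the authors state explicitly that their results do not address Conjecture~\ref{conj:barvinok1} directly, and they show (via Theorems~\ref{thm:lower} and~\ref{thm:frob}) that the natural second-moment strengthening, Conjecture~\ref{conj:barvinok2}, is \emph{false}. Your proposal is a research plan rather than a proof, and its central quantitative step cannot work as stated. The intermediate inequality you propose, $\Exp[\Xfrobsym^k] \le (\gamma_d')^{nk}\,\Exp[\Xfrobsym]^k$ with $\gamma_d'\to 1$, is at $k=2$ precisely Conjecture~\ref{conj:barvinok2}: the paper proves that for $A$ the identity and $d$ constant the critical ratio $\Exp[\Xsym^2]/\Exp[\Xsym]^2$ is $\Omega(2^n/n^d)$, and Theorem~\ref{thm:frob} transfers this to the Frobenius estimator up to a factor of $d^4$, so $\Exp[\Xfrobsym^2]/\Exp[\Xfrobsym]^2$ is exponentially large and your $k=2$ bound already fails. (Moreover, if such a bound held uniformly in $k$ it would force $\Xfrobsym \le (\gamma_d')^n\,\Exp[\Xfrobsym]$ almost surely, which is far stronger than anything one could hope to prove here.)

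The deeper difficulty is that your concentration strategy only ever controls the upper tail. Markov and higher-moment bounds of the form $\Pr[X \ge t] \le \Exp[X^k]/t^k$ say nothing about $\Pr[X \le \Exp[X]/t]$, and the lower tail is exactly where the conjecture is hard: you must rule out $\norm{M(\one)}^2$ (your denominator) and $\norm{M(A)}^2$ being anomalously small. The standard tool for that is Paley--Zygmund, which requires $\Exp[X^2] = O(\Exp[X]^2)$ up to the target factor---precisely what the paper's lower bound forecloses for constant $d$. Your alternative suggestions (concentration of $\log\Xfrobsym$, a free-probability analysis of the large-$d$ limit) are plausible research directions and are consistent with the paper's own remark that any proof would have to go beyond the first two moments, but they are not executed here, and the portions of your plan that are concrete are contradicted by the paper's results. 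In short: the paper offers no proof to compare against, and your proposal does not supply one.
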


Our results do not address Conjecture~\ref{conj:barvinok1} directly,  However, given Chebyshev's inequality, it is very natural to consider the following stronger conjecture, which would imply Conjecture~\ref{conj:barvinok1}:
\begin{conjecture}
\label{conj:barvinok2}
There is a sequence of constants $\theta_d$, where $\lim_{d \to \infty} \theta=1$, such that for any $n \times n$ matrix $A$, the critical ratio of the estimator $\Xfrobsym=\norm{M(A)}^2$ obeys
\[
\frac{\Exp[\Xfrobsym^2]}{\Exp[\Xfrobsym]^2} \le \theta_d^n \, .
\]
\end{conjecture}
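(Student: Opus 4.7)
The plan is to upgrade equation~(\ref{eq:sym-growing}) of Theorem~\ref{thm:main}, which gives only $O(\e^{4n^2/d})$, into a bound of the form $\theta_d^n$ with $\theta_d \to 1$, by reorganizing the four-replica Wick/Weingarten expansion of $\Exp[\Xsym^2]$ around character theory of the symmetric group. By Theorem~\ref{thm:frob}, the $d^4$ gap between $\Xsym$ and $\Xfrobsym$ is $n$-independent, so it is enough to establish the bound for $\Xsym$ and then absorb the constant into $\theta_d$.

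First I would expand $\Exp[\Xsym^2]$ using the definition of $\sdet$: an eightfold sum over permutations coming from the two copies of $\sdet M$ and their conjugates, together with a Haar or Gaussian average over $4n$ random matrices. This average decomposes as a sum over contractions---Weingarten pairs $(\sigma,\tau) \in S_{4n} \times S_{4n}$ for Haar, perfect matchings for Gaussian---each yielding a factor of the form $d^{-(\text{cycle deficit})}$ times an $A$-dependent permanent-like weight. I would isolate the ``disconnected'' contractions that couple the two copies of $\sdet M$ only trivially; these sum to exactly $(\Exp[\Xsym])^2$. The critical ratio is then $1+R$, where $R$ is the normalized contribution of the ``connected'' contractions, each carrying a factor $d^{-k}$ coming from $k \ge 1$ cross-links between the two copies.

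Second, I would exploit the sign character $(-1)^{\alpha'\alpha^{-1}}$ as the character of the alternating representation of $S_n$, rewriting $R$ via Schur--Weyl duality as a weighted combination of character ratios $\chi^\lambda(\cdot)/\dim\lambda$ summed over partitions $\lambda \vdash n$. Using Larsen--Shalev-type bounds on such ratios (which give strong suppression for permutations with few fixed points), the target is to show that connected contractions with $k$ cross-links contribute at most $\binom{n}{k}(C/d)^k$ in aggregate, so that $R \le (1+C/d)^n - 1$ and hence $\theta_d = 1+O(1/d)$ for $d$ large. Combined with a sufficiently large constant $\theta_d$ for small $d$ (obtainable from equation~\ref{eq:sym-constant}), this would give a uniform sequence with $\theta_d \to 1$.

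The central obstacle is that Theorem~\ref{thm:lower} already shows that for $A$ the identity matrix and $d$ any constant, the critical ratio is $\Omega(2^n/n^d)$, which forces $\theta_d \ge 2 - o(1)$ for every fixed $d$ and so apparently prevents $\theta_d \to 1$ along constant values of $d$. A successful proof therefore cannot rely on simple triangle-inequality accounting of the connected contractions; it must exhibit cancellation strong enough to bypass the identity-matrix lower bound---for instance by grouping signed Weingarten pairings into equivalence classes whose aggregate contribution is subexponentially suppressed in $n$ even when individual terms are exponentially large, or by exploiting structural features of the Haar/Gaussian measure that become more potent as $d$ grows. Locating and proving such cancellation is, in my view, the principal difficulty, and it is plausible that the conjecture as literally stated needs to be relaxed (for example, by restricting to matrices $A$ avoiding near-permutation structure, or by permitting $\theta_d$ to remain bounded below by a constant exceeding $1$) before any version of it can be proved.
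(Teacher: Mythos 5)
The statement you set out to prove is not proved in the paper; it is stated only to be refuted. Immediately after Conjecture~\ref{conj:barvinok2} the authors observe that Theorems~\ref{thm:lower} and~\ref{thm:frob} together imply the conjecture is \emph{false}, and your own final paragraph has already located exactly why. Theorem~\ref{thm:lower} is a lower bound on the actual critical ratio $\Exp[\Xsym^2]/\Exp[\Xsym]^2$ for $A$ the identity matrix, obtained by an essentially exact evaluation: for the identity matrix the only contributing coloring has $\kappa=\lambda=\mu=\nu=1$, the second moment equals $a_d^{(2)}$ (up to $(1-O(1/d))^n$ in the Haar case), and $a_d^{(2)}/a_d^2 = \Omega(2^n/n^d)$. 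It is not an artifact of a lossy upper-bound technique, so there is no hidden cancellation left to find. Theorem~\ref{thm:frob} then transfers the bound to $\Xfrobsym$ at a cost of only $d^4$, which is independent of $n$. Hence any $\theta_d$ with $\theta_d^n \ge \Exp[\Xfrobsym^2]/\Exp[\Xfrobsym]^2 = \Omega(2^n/(d^4 n^{d+O(1)}))$ for all $n$ must satisfy $\theta_d \ge 2$ for every fixed $d$, so $\lim_{d\to\infty}\theta_d \ge 2$ and the conjecture fails.

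Your proposed strategy also misidentifies where the difficulty lies, even setting aside the falsity of the statement. The decomposition into ``disconnected'' contractions summing to $\Exp[\Xsym]^2$ plus ``connected'' ones is essentially what the paper does in Section~\ref{sec:second-sym}, but the dominant excess does not come from cross-links suppressed by powers of $1/d$: it comes from the combinatorial multiplicity of colorings (the choice, for each isolated edge, of which of the two decoupled pairings in~\eqref{eq:decouple} to take, contributing $\binom{n}{k}$ terms at each $k$), which produces the $\binom{2n}{n} \approx 4^n$ against $\binom{n+d}{n+1}^2$ in the ratio $a_d^{(2)}/a_d^2$. That multiplicity is present at full strength for the identity matrix and is not damped as $d$ grows unless $d = \Omega(n^2)$, which is exactly the regime of~\eqref{eq:sym-growing}. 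The correct conclusion, which you should state outright rather than hedge on, is that the conjecture is false and the proof sketched in your first two paragraphs cannot exist.
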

Sadly, Theorems~\ref{thm:lower} and~\ref{thm:frob} imply that Conjecture~\ref{conj:barvinok2} is false.  It is still conceivable that Conjecture~\ref{conj:barvinok1} is true, but it seems that any proof of it would have to bound higher moments of the estimator: the first and second moments alone do not scale in a way that gives concentration.

The remainder of the paper is organized as follows.  In Section~\ref{sec:exp}, we calculate the expectations of these estimators, showing that they are each a constant $a_d$ times the permanent, and computing the constant explicitly using a diagrammatic technique.  In Sections~\ref{sec:second-unsym} and~\ref{sec:second-sym}, we bound their second moments using the same technique, proving Theorems~\ref{thm:main} and~\ref{thm:lower}.  In Section~\ref{sec:frob}, we relate the critical ratio for the Frobenius estimators to the trace-squared estimators, proving Theorem~\ref{thm:frob}.
Finally, in Section~\ref{sec:conclusion} we discuss the implications of this theorem, and the remaining barriers to an algebraic approximation scheme for the permanent.

\section{The expectation}
\label{sec:exp}

Before we proceed, we write the following expansions for these estimators, which we will find useful for calculating their expectations and second moments:
\begin{align}
\Xunsym 
%&= \abs{\tr \det M}^2 = \left( \tr \det M \right) \left( \tr \det M^* \right) \nonumber\\
%&= \sum_{\alpha,\beta} 
%(-1)^{\alpha \beta}
%\left( \tr \prod_i M_{i, \alpha i} \right)
%\left( \tr \prod_i M^*_{i, \beta i} \right) 
%\nonumber \\
%&= \sum_{\alpha,\beta} 
%(-1)^{\alpha \beta}
%\left( \prod_{i=1}^n A_{i,\alpha i} A_{i, \beta} \right)
%\left( \tr \prod_i \rho_{i, \alpha i} \right)
%\left( \tr \prod_i \rho^*_{i, \beta i} \right) 
%\nonumber \\
&= \sum_{\alpha,\beta \vdash A} 
(-1)^{\alpha \beta}
\left( \tr \prod_i \rho_{i, \alpha i} \right)
\left( \tr \prod_i \rho^*_{i, \beta i} \right) 
\label{eq:det-expansion} \\
\Xsym 
%&= \abs{\tr \sdet M}^2 = \left( \tr \sdet M \right) \left( \tr \sdet M^* \right) 
%\nonumber \\
%&= \frac{1}{n!^2} \sum_{\alpha,\alpha',\beta,\beta'} 
%(-1)^{\alpha' \alpha^{-1}} (-1)^{\beta' \beta^{-1}}
%\left( \tr \prod_i M_{\alpha i, \alpha' i} \right)
%\left( \tr \prod_i M^*_{\beta i, \beta' i} \right) 
%\nonumber \\
%&= \frac{1}{n!^2} \sum_{\kappa,\lambda} 
%(-1)^{\kappa \lambda} \left( \prod_{i=1}^n A_{i,\kappa i} A_{i,\lambda i} \right)
%\sum_{\alpha,\beta} 
%\left( \tr \prod_i \rho_{\alpha i, \kappa \alpha i} \right)
%\left( \tr \prod_i \rho^*_{\beta i, \lambda \beta i} \right) 
%\nonumber \\
&= \sum_{\kappa,\lambda \vdash A} 
(-1)^{\kappa \lambda} \,
%\left( \prod_{i=1}^n A_{i,\kappa i} A_{i,\lambda i} \right)
\Exp_{\alpha,\beta} \!
\left( \tr \prod_i \rho_{\alpha i, \kappa \alpha i} \right)
\left( \tr \prod_i \rho^*_{\beta i, \lambda \beta i} \right) 
\, .
\label{eq:sdet-expansion}
\end{align}

Since $\Exp[\rho_{ij}] = 0$, any term in which some $\rho_{ij}$ appears only once will have zero expectation.  Then the cross-terms in the expansion~\eqref{eq:det-expansion} are zero in expectation except when $\alpha=\beta$, so 
\begin{align}
\Exp[\Xunsym] 
%&= \Exp \left[ \sum_{\alpha,\beta \vdash A} (-1)^{\alpha} (-1)^{\beta} \left( \tr \prod_i \rho_{i, \alpha i} \right) \left( \tr \prod_i \rho^*_{i, \beta i} \right)\right] 
%\nonumber \\
 &= \sum_{\alpha \vdash A} \Exp \!\left(\tr \prod_i \rho_{i,\alpha i}\right)\left(\tr \prod_i \rho_{i,\alpha i}^*\right) 
= \sum_{\alpha \vdash A} \Exp \abs{ \tr \prod_i \rho_{i,\alpha i} }^2 
% \nonumber \\
% &= \perm (A) \cdot \Exp \abs{ \tr \sigma }^2 
 = \perm A \, .
\label{eq:exp-unsym}
\end{align}
Here we used the use the following fact, which is an easy exercise: if $\sigma$ is the product of $n$ independent random matrices, chosen from the Haar measure or the Gaussian measure, then $\Exp \abs{ \tr \sigma }^2 = 1$. 

Similarly, the only terms in~\eqref{eq:sdet-expansion} that contribute to $\Exp[\Xsym]$ are those where $\lambda=\kappa$, so that each $\rho_{ij}$ appears twice or not at all.  Thus
\begin{align}
\Exp_{\{\rho_{ij}\}} [\Xsym]
%&= 
%\sum_{\kappa \vdash A} 
%%\prod_{i=1}^n A_{i,\kappa i} 
%\Exp_{\{\rho_{ij}\}}
%\Exp_{\alpha,\beta} 
%\left( \tr \prod_i \rho_{\alpha i, \kappa \alpha i} \right)
%\left( \tr \prod_i \rho^*_{\beta i, \kappa \beta i} \right) 
%\nonumber \\
&= \sum_{\kappa \vdash A} 
%\prod_{i=1}^n A_{i,\kappa i} \right)
\Exp_{\{\sigma_i\}}
\Exp_{\alpha,\beta} 
\left( \tr \prod_i \sigma_{\alpha i} \right)
\left( \tr \prod_i \sigma^*_{\beta i} \right) 
%\nonumber \\
%&= \left( \sum_{\kappa} 
%\prod_{i=1}^n A_{i,\kappa i} \right)
%\left( 
%\Exp_{\{\sigma_i\}}
%\Exp_{\eta} 
%\left( \tr \prod_i \sigma_{i} \right)
%\left( \tr \prod_i \sigma^*_{\eta i} \right) \right) 
%\nonumber \\
= a_d \cdot \perm A 
\nonumber \\
%\intertext{where} 
\mbox{where} \quad 
a_d &=
%\Exp_{\{\sigma_i\}} \Exp_{\alpha, beta} 
%\left( \tr \prod_i \sigma_{i} \right)
%\left( \tr \prod_i \sigma^*_{\eta i} \right) \, .
\Exp_{\{\sigma_i\}}
\Exp_{\alpha,\beta} 
\left( \tr \prod_i \sigma_{\alpha i} \right)
\left( \tr \prod_i \sigma^*_{\beta i} \right)\, .
\label{eq:exp-sym}
\end{align}
A similar result for the Frobenius estimator $\Xfrobsym = \norm{\sdet M}^2$ appears as Theorem 4.3 in Barvinok~\cite{barvinok-sdet}.

We can think of $a_d$ as the expectation, over all pairs of permutations $\alpha, \beta$, of the covariance between the trace of two products of the same $n$ random matrices, where the products are taken in the orders given by $\alpha$ and $\beta$.  This expectation clearly stays the same if we assume that $\alpha$ is the identity $1$ and $\beta$ is uniformly random, so we can write 
\[
a_d =
\Exp_{\beta} \Exp_{\{\sigma_i\}} 
\left( \tr \prod_i \sigma_{i} \right)
\left( \tr \prod_i \sigma^*_{\beta i} \right) \, .
\]

We will evaluate these covariances using a diagrammatic approach.  First, suppose we have $n$ linear operators $\sigma_1, \ldots, \sigma_n$.  The trace of their product is 
\[
%\sum_{i_1,\ldots,i_n=1}^d 
(\sigma_1)^{i_1}_{i_2} (\sigma_2)^{i_2}_{i_3} \cdots (\sigma_n)^{i_n}_{i_1} \, . 
\]
Here we save ink by using the Einstein summation convention, in which any index that appears twice is automatically summed over.  
We can think of this product as a particular kind of internal trace of the tensor product
\[
(\sigma_1 \otimes \cdots \otimes \sigma_n)^{i_1, \ldots, i_n}_{j_1, \ldots, j_n}
= (\sigma_1)^{i_1}_{j_1} \cdots (\sigma_n)^{i_n}_{j_n} \, , 
\]
where we contract the index $i_t$ with $j_{(i+1) \bmod n}$ for each $i$.  We draw this on the left-hand side of Fig.~\ref{fig:internal}.  Then if $n=3$, say, and $\beta$ is the the transposition $(2\, 3)$, the covariance 
\[
\Exp_{\sigma_1,\sigma_2,\sigma_3} \left( \tr \sigma_1 \sigma_2 \sigma_3 \right) \left( \tr \sigma_1 \sigma_3 \sigma_2 \right)^*
\]
becomes a certain contraction of the tensor product of three independent and identical expectations, 
\begin{equation}
\label{eq:3cupcaps}
 \Exp_{\sigma_1} \left[ \sigma_1 \otimes \sigma_1^* \right] \otimes
 \Exp_{\sigma_2} \left[ \sigma_2 \otimes \sigma_2^* \right] \otimes
 \Exp_{\sigma_3} \left[ \sigma_3 \otimes \sigma_3^* \right] \, . 
\end{equation}

The following lemma is well-known;  we prove it in Appendix~\ref{sec:cupcap} for completeness.

\begin{lemma}
\label{lem:cupcap}
If $\sigma$ is chosen according to the Haar measure or the Gaussian measure, then
\begin{equation}
\label{eq:cupcap-coord}
\Exp_\sigma \left[ \sigma \otimes \sigma^* \right]^{ik}_{j\ell} 
= \frac{1}{d} \,\delta^{ik} \delta_{j\ell} \, . 
\end{equation}
\end{lemma}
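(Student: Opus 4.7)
The plan is to handle the Gaussian and Haar cases separately, since each admits a short argument tailored to the symmetry of its measure. Using the convention implicit elsewhere in the paper that $\sigma^*$ denotes entrywise conjugation (so that $[\sigma \otimes \sigma^*]^{ik}_{j\ell} = \sigma^i_j\,\overline{\sigma^k_\ell}$ and $\tr\prod_i \rho^*_i = \overline{\tr\prod_i \rho_i}$, consistent with $\Xunsym = |\tr \det M|^2 = \tr(\det M)\,\overline{\tr(\det M)}$), the lemma reduces to showing that $\Exp[\sigma^i_j\,\overline{\sigma^k_\ell}] = \tfrac{1}{d}\,\delta^{ik}\delta_{j\ell}$.

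The Gaussian case is immediate: the entries $\sigma^i_j$ are independent complex Gaussians of mean zero and variance $1/d$, so $\Exp[\sigma^i_j\,\overline{\sigma^k_\ell}]$ vanishes unless $(i,j)=(k,\ell)$, in which case it equals $\Exp[|\sigma^i_j|^2] = 1/d$.

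For the Haar case the plan is to exploit the bi-invariance of Haar measure: if $U$ is Haar-distributed, then so is $VUW$ for any fixed unitaries $V,W$. Taking $V = \mathrm{diag}(e^{i\theta_1}, \ldots, e^{i\theta_d})$ and noting that $(VU)^i_j = e^{i\theta_i}\,U^i_j$, invariance yields
\[
\Exp[U^i_j\,\overline{U^k_\ell}] = e^{i(\theta_i - \theta_k)}\,\Exp[U^i_j\,\overline{U^k_\ell}]
\]
for every choice of phases, which forces $i = k$. A symmetric argument using a diagonal $W$ forces $j = \ell$, so only the diagonal expectations $\Exp[|U^i_j|^2]$ survive. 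Choosing $V$ and $W$ to be permutation matrices then shows that this quantity is independent of $i$ and of $j$; calling the common value $c$, the deterministic identity $\sum_j |U^i_j|^2 = 1$ (rows of a unitary are unit vectors) gives $dc = 1$ upon taking expectation and summing over $j$, so $c = 1/d$.

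I do not anticipate a substantial obstacle here: this is a textbook application of Schur-style symmetry reasoning, and the two cases both reduce to a single scalar variance computation once the indices are pinned down by invariance. The only real pitfall is bookkeeping around the indexing convention, in particular confirming the interpretation of $\sigma^*$ that makes the stated Kronecker pattern $\delta^{ik}\delta_{j\ell}$ (rather than $\delta^{i\ell}\delta_{jk}$) come out correctly.
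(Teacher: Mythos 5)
Your proof is correct, and the Gaussian half is essentially identical to the paper's (independence of entries plus $\Exp|\sigma^i_j|^2 = 1/d$). For the Haar half, however, you take a genuinely different and more elementary route. The paper identifies $\sigma \otimes \sigma^*$ with the conjugation action of $\U(d)$ on $d \times d$ matrices, decomposes it as the trivial representation (scalars) plus the irreducible $(d^2-1)$-dimensional traceless part, and observes that averaging over the Haar measure yields the projection onto the trivial isotypic subspace, namely $A \mapsto \frac{1}{d}(\tr A)\one$, whose coordinates are exactly $\frac{1}{d}\delta^{ik}\delta_{j\ell}$. You instead use bi-invariance of the Haar measure under diagonal and permutation unitaries to force $i=k$, $j=\ell$ and to show the surviving diagonal expectations are all equal, then fix the constant via the deterministic row normalization $\sum_j |U^i_j|^2 = 1$. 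Your argument requires no representation theory and is self-contained; what the paper's approach buys is that the same projection-onto-isotypic-components machinery is reused verbatim in Lemma~\ref{lem:fourth} for the fourth moment $\Exp_\sigma[\sigma\otimes\sigma\otimes\sigma^*\otimes\sigma^*]$, where the decomposition into symmetric and antisymmetric tensor squares is essential and an elementary invariance argument would be much harder to push through. Your closing concern about the indexing convention is well placed but resolved correctly: $\sigma^*$ is entrywise conjugation, so the pattern $\delta^{ik}\delta_{j\ell}$ is indeed the one that comes out.
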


\noindent
We can represent~\eqref{eq:cupcap-coord} diagrammatically as a ``cupcap,'' 
\begin{equation}
\label{eq:cupcap}
 \Exp_\sigma \left[ \sigma \otimes \sigma^* \right]
 %= \Pi^{\sigma \otimes \sigma^*}_\one 
 = \frac{1}{d} \,\cupcap \, . 
\end{equation}
A tensor product such as~\eqref{eq:3cupcaps} becomes three cupcaps side by side, and contracting it consists of connecting pairs of inputs and outputs until the diagram becomes closed.  For instance, the expectation of $\left( \tr \sigma_1 \sigma_2 \sigma_3 \right) \left( \tr \sigma_1 \sigma_3 \sigma_2 \right)^*$ corresponds to the diagram on the right-hand side of Fig.~\ref{fig:internal}. Here we have drawn the cupcaps on the top and bottom of the diagram (between corresponding indices of $\sigma_i$ and $\sigma_i*$) and the connections between them in the interior.  

When we evaluate the trace of this diagram, each of the $n$ cupcaps introduces a factor of $1/d$ according to~\eqref{eq:cupcap}, and each loop in the diagram corresponds to an index which can be set independently to any value between $1$ and $d$.  So, the diagram evaluates to $d^{c-n}$ where $c$ is the number of loops.  In this case $n=3$ and $c=1$, and the covariance is $1/d^2$.

\begin{figure}
\centering
\includegraphics[width=0.7 \textwidth]{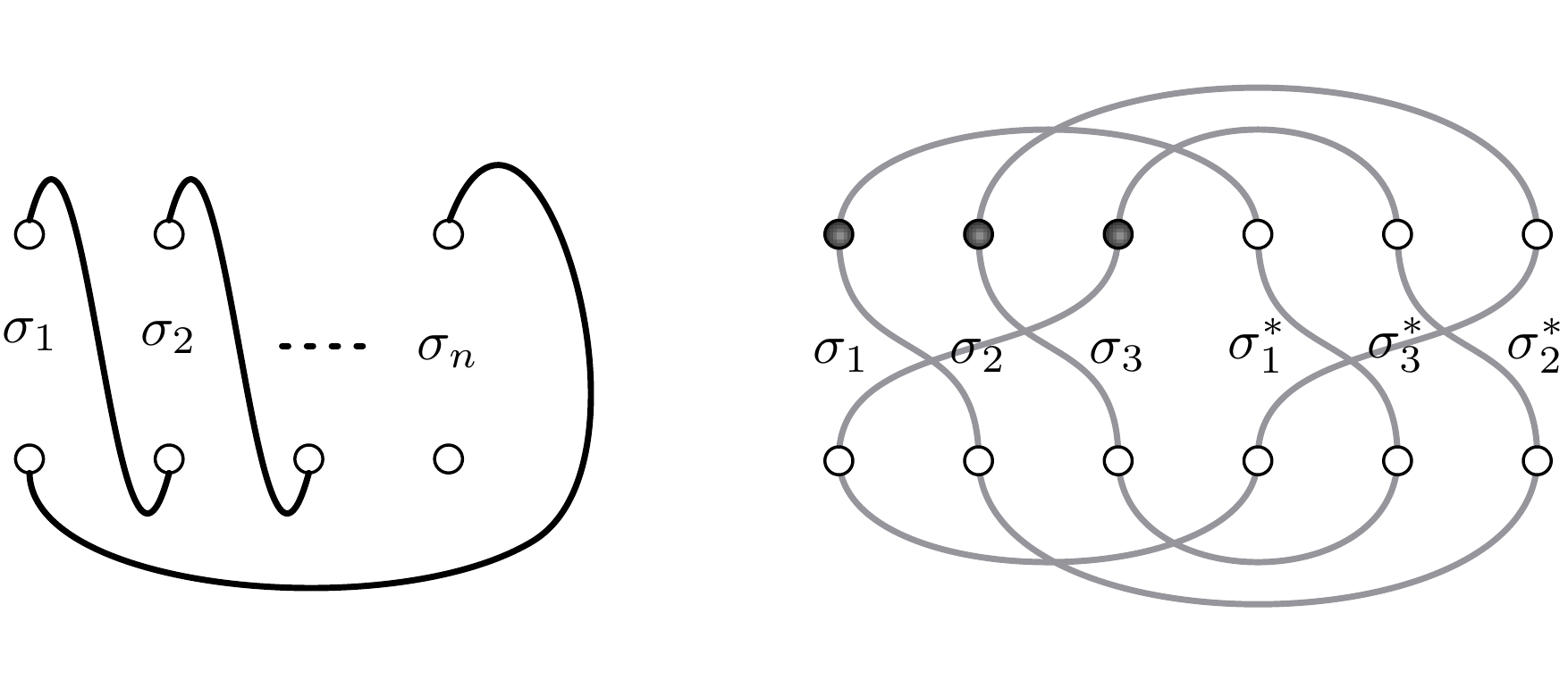}
\caption{The trace of the matrix product $\sigma_1 \sigma_2 \cdots \sigma_n$ is a contraction of the tensor product $\sigma_1 \otimes \cdots \otimes \sigma_n$.  Combining this with~\eqref{eq:cupcap} shows that the covariance between the traces of two permuted products is given by $d^{c-n}$ where $c$ is the number of loops in a diagram like that on the right.  In this case, the covariance between $\tr  \sigma_1 \sigma_2 \sigma_3$ and $\tr \sigma_2 \sigma_1 \sigma_3$ is $1/d^2$, since $n=3$ and $c=1$.}
\label{fig:internal}
\end{figure}

More generally, we can write the covariance between $\tr \prod_i \sigma_i$ and $\tr \prod_i \sigma_{\beta i}$ as a function of $\beta$ as follows.  The cupcaps match the upper indices of the $\sigma$s in the first product to those of the second product according to $\beta$, and the lower indices of the second product to those of the first product according to $\beta^{-1}$.  If $r$ denotes the rotation $(1\, 2\, \cdots\, n)$, which ``weaves'' the $\sigma$s together and takes the trace of their product, then following the diagram around gives a permutation on (say) the upper $n$ indices of the first product (darkened in Fig.~\ref{fig:internal}) equal to the commutator
$[\beta,r] = \beta r \beta^{-1} r^{-1}$. 
Each loop in the diagram corresponds to a cycle in this permutation.  So, we have
\begin{equation}
\label{eq:cd-c}
a_d = \frac{1}{d^n} \Exp_\beta d^{c([\beta,r])} 
\end{equation}
where $c(\pi)$ denotes the number of cycles in a permutation $\pi$.
%Alternately, let $t(\pi)$ denote the \emph{transposition distance} of $\pi$, i.e., the length of the shortest sequence of transpositions whose product is $\pi$.  For instance, if $\pi$ is a single $k$-cycle, then $t(\pi) = k-1$.  Since $t(\pi) = n-c(\pi)$, we can also write
%\begin{equation}
%\label{eq:cd-t}
%a_d = \Exp_\beta d^{-t([\beta,r])} \, .
%\end{equation}
Note that we always have
\begin{equation}
\label{eq:sd-lower}
a_d \ge \frac{n}{n!} \, . 
\end{equation}
This follows because, with probability $n/n!$, a uniformly random $\beta$ is one of the $n$ powers of $r$.  In that case $[\beta,r]=1$, and $d^{c([\beta,r])}=d^n$.  It can be shown that this bound is tight when $d = \omega(n^2)$.

The expectation~\eqref{eq:cd-c} can be viewed as the inner product of $P_n$, the uniform distribution over the conjugacy class $[r] = \{ \pi^{-1} r \pi \mid \pi \in S_n\}$, and the function $d^{c(\cdot)}$, both of which are \emph{class functions}---invariant under conjugation. Below, we show that these can be expanded in terms of the characters of the group $S_n$ and analyzed using the Littlewood-Richardson rule; this yields an exact expression for $a_d$: 
\begin{lemma}
\label{lem:sd-exact}
\begin{equation}
\label{eq:sd-exact}
\text{If $d \le n$,} \;
a_d = \frac{1}{d^n} \binom{n+d}{n+1} \, .
\quad
\text{If $d > n$,} \;
a_d = \frac{1}{d^n} \left( \binom{n+d}{n+1} - \binom{d}{n+1} \right) \, .
\end{equation}
\end{lemma}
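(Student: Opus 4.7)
The plan is to exploit two pieces of standard machinery: a character identity for commutator averages over $S_n$, and the Schur--Weyl decomposition of $(\C^d)^{\otimes n}$.

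For the first, Schur's lemma applied to the conjugation average gives, for each irreducible character $\chi^\lambda$ of $S_n$,
\[
\frac{1}{n!} \sum_{\beta \in S_n} \rho^\lambda(\beta)\, \rho^\lambda(r)\, \rho^\lambda(\beta)^{-1} = \frac{\chi^\lambda(r)}{\chi^\lambda(1)}\, I \, .
\]
Multiplying on the right by $\rho^\lambda(r^{-1})$, taking trace, and using that $S_n$ characters are real-valued, yields
\[
\Exp_\beta \chi^\lambda([\beta, r]) = \frac{\chi^\lambda(r)^2}{\chi^\lambda(1)} \, .
\]
For the second, the action of $S_n$ on $(\C^d)^{\otimes n}$ by permutation of tensor factors has character $\pi \mapsto d^{c(\pi)}$, and Schur--Weyl duality decomposes this representation as $\bigoplus_{\lambda \vdash n} V_\lambda \otimes W_\lambda$, where $V_\lambda$ is the $\GL_d$-Weyl module of dimension $s_\lambda(1^d)$ (which vanishes for $\ell(\lambda) > d$) and $W_\lambda$ is the $S_n$-irrep of character $\chi^\lambda$. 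Expanding the character of this action in terms of the $\chi^\lambda$ therefore gives
\[
d^{c(\pi)} = \sum_{\lambda \vdash n} s_\lambda(1^d)\, \chi^\lambda(\pi) \, .
\]

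Combining the two and applying linearity of expectation to~\eqref{eq:cd-c},
\[
a_d = \frac{1}{d^n} \sum_{\lambda \vdash n} s_\lambda(1^d) \frac{\chi^\lambda(r)^2}{\chi^\lambda(1)} \, .
\]
By the Murnaghan--Nakayama rule, $\chi^\lambda(r) = 0$ unless $\lambda$ is a hook $(n-k, 1^k)$ for some $0 \le k \le n-1$, in which case $\chi^\lambda(r) = (-1)^k$ and $\chi^\lambda(1) = \binom{n-1}{k}$. Applying the hook-content formula $s_\lambda(1^d) = \prod_{(i,j) \in \lambda} (d + j - i)/h(i,j)$ to a hook and cancelling factors produces the clean identity
\[
\frac{s_{(n-k,1^k)}(1^d)}{\binom{n-1}{k}} = \binom{d+n-k-1}{n}
\]
when $d \ge k+1$ (and zero otherwise). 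Summing over $k$ in $\{0,\dots,\min(n-1, d-1)\}$ and reindexing $j = d - k - 1$, the hockey-stick identity $\sum_{j=0}^{m} \binom{n+j}{n} = \binom{n+m+1}{n+1}$ collapses the sum: when $d \le n$ the full range $j = 0, \ldots, d-1$ appears, giving $\binom{n+d}{n+1}$; when $d > n$ only the range $j = d-n, \ldots, d-1$ survives, giving $\binom{n+d}{n+1} - \binom{d}{n+1}$. Dividing by $d^n$ matches~\eqref{eq:sd-exact}.

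The main obstacle I expect is the bookkeeping in applying the hook-content formula to a hook shape: one must track the arm cells $(1,2), \ldots, (1,n-k)$ and the leg cells $(2,1), \ldots, (k+1,1)$ separately, and verify that the resulting quotient telescopes to a single binomial coefficient. The character identity is a textbook application of Schur's lemma, Murnaghan--Nakayama at an $n$-cycle is standard (only border-strip tableaux consisting of a single strip of length $n$ contribute, forcing a hook), and the final summation is a direct use of the hockey-stick identity.
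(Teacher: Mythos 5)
Your proof is correct and follows essentially the same route as the paper's: both expand the commutator average over the irreducible characters of $S_n$, use Murnaghan--Nakayama to reduce to hook shapes, identify the multiplicity of each hook character in the permutation representation on $d$-ary strings, and finish with a binomial summation. The only real difference is that you evaluate that multiplicity in closed form as $s_{(n-k,1^k)}(1^d)$ via Schur--Weyl and the hook content formula (your identity $s_{(n-k,1^k)}(1^d)/\binom{n-1}{k} = \binom{d+n-k-1}{n}$ checks out), whereas the paper sums Kostka numbers over contents and manipulates a double binomial sum --- your version is arguably a bit cleaner, but it is the same argument.
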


\begin{proof}
First, note that the function $d^{c(\pi)}$ is a \emph{class function}, i.e., one which is invariant under conjugation.  Therefore, in $\Exp_\beta d^{c([\beta,r])}$ we can replace $[\beta,r]$ with $\zeta [\beta,r] \zeta^{-1}$ where $\beta$ and $\zeta$ are uniformly random.  Since
\[
\zeta [\beta,r] \zeta^{-1} 
= \zeta \beta r \beta^{-1} r^{-1} \zeta^{-1}
= \big( (\zeta \beta) r (\zeta \beta)^{-1} \big) \big( \zeta r^{-1} \zeta^{-1} \bigr) \, ,
\]
we can treat this as the expectation of $d^{c(\pi)}$ where $\pi$ is the product of two uniformly random elements of $[r]$, the conjugacy class consisting of cycles of length $n$.  In other words, if $P_n: S_n \rightarrow \R$ is the uniform distribution on the conjugacy class of $n$-cycles, then 
\begin{equation}
\label{eq:sd}
a_d = \frac{1}{d^n} \sum_\pi (P_n * P_n)(\pi) \,d^{c(\pi)}
\end{equation}
where $P_n * P_n$ is the convolution of $P_n$ with itself, 
\[
(P_n * P_n)(\pi) = \sum_{\eta \in S_n} P_n(\eta) \,P_n(\eta^{-1} \pi) \, . 
\]

We will view~\eqref{eq:sd} as an inner product over $S_n$,
\begin{equation}
\label{eq:sd-inner}
a_d 
= \frac{n!}{d^n} \inner{P_n * P_n}{d^{c(\cdot)}} \, ,
\end{equation}
where the inner product over a group $G$ of two functions $f_1,f_2:G \to \C$ is defined as
\[
\inner{f_1}{f_2} = \frac{1}{|G|} \sum_{g \in G} f_1(g)^* f_2(g) \, . 
\]
To evaluate~\eqref{eq:sd-inner}, we will expand $P$ and $d^{c(\cdot)}$ in the Fourier basis, as a sum of irreducible characters of $S_n$. Recall that the characters of a finite group are orthonormal under the inner product above and, additionally, convolution is transformed to pointwise product in the Fourier basis.  In short, for two characters $\chi$ and $\psi$, 
\begin{equation}
\label{eq:characters}
\chi * \psi = \begin{cases}
\frac{|G|}{\chi(1)}\; \chi &\text{if}\;\chi = \psi,\\
0 & \text{if}\;\chi \neq \psi,
\end{cases}
\quad \text{and}\quad
\inner{\chi}{\psi} = \begin{cases}1 & \text{if}\;\chi= \psi, \\ 0 & \text{if}\;\chi \neq \psi.
\end{cases}
\end{equation}

Each character of the symmetric group is associated with a Young diagram, i.e., a partition $\lambda_1 \ge \lambda_2 \ge \cdots$ where $\sum_i \lambda_i = n$.  % (see Appendix~\ref{sec:representation-theory}).  
In light of the Murnaghan-Nakayama rule (Lemma~\ref{lem:mn} of Appendix~\ref{sec:representation-theory}), the uniform distribution $P_n$ over the conjugacy class $[r]$ is supported solely on \emph{hooks}, i.e., Young diagrams consisting of a single ribbon of size $n$.  Let  $\Lambda_t$ denote the hook of height $t+1$, in which $\lambda_1 = n-t$ for some $0 \le t < n$ and $\lambda_i = 1$ for $1 < i \le t+1$. Let $\chi_t$ denote the corresponding character.  Then $\dim \chi_t = \chi_t(1) = {n-1 \choose t}$ and, again appealing to Lemma~\ref{lem:mn}, we have $\chi_t([r]) = (-1)^t$.  Applying~\eqref{eq:characters} then gives
\begin{equation}
\label{eq:chit-p}
\inner{P_n}{\chi_t} = \frac{(-1)^t}{n!} 
\quad \text{and} \quad
\inner{P_n * P_n}{\chi_t} = \frac{1}{n!} \frac{1}{\binom{n-1}{t}} \, . 
\end{equation}

To calculate the inner product $\inner{d^{c(\cdot)}}{\chi_t}$, consider the following combinatorial representation of $S_n$.  Let $\Sigma$ be the set of strings of length $n$ over the alphabet $\{1,\ldots,d\}$, and let $S_n$ act on $\Sigma$ in the natural way, by permuting the symbols in a given string.  Given a permutation $\pi$, the character $\chi_\Sigma(\pi)$ is the number of strings fixed by $\pi$.  Since each of $\pi$'s cycles can be given an independent label in $\{1,\ldots,d\}$, we have $\chi_\Sigma(\pi) = d^{c(\pi)}$.  

It follows that $\inner{d^{c(\cdot)}}{\chi_t} = \inner{\chi_t}{\chi_\Sigma}$, the number of copies of $\Lambda_t$ appearing in the decomposition of $\Sigma$ into irreducible representations.  To find this, we first decompose $\Sigma$ into a direct sum of combinatorial representations $\Sigma_{(n_1,\ldots,n_d)}$, consisting of strings where $i$ appears $n_i$ times for each $i \in \{1,\ldots,d\}$.  Then $\inner{\chi_{(n_1,\ldots,n_d)}}{\chi_t}$ is given by a \emph{Kostka number}, defined as follows.  First, sort the $n_i$ in decreasing order so that they form a Young diagram $N$.  Then $K^{\Lambda_t}_N = \inner{\chi_{(n_1,\ldots,n_d)}}{\chi_t}$ is the number of semistandard tableaux of shape $\Lambda_t$ and content $N$: that is, the number of ways to fill $\Lambda_t$ with $n_i$ $i$s for each $i \in \{1,\ldots,d\}$, where each row is nondecreasing and where each column is strictly increasing.  

Since $\Lambda_t$ is a hook, to specify a semistandard tableau with a given content it suffices to specify the content of the leftmost column.  Since this column must be strictly increasing, its $t+1$ entries must be distinct.  If $N$ has $k$ rows, i.e., if $n_i \ne 0$ for $k$ values of $i$, then the first one must appear in the top cell, but the remaining $t$ cells can be chosen arbitrarily.  Thus $K^{\Lambda_t}_N = {k-1 \choose t}$, and is $0$ if $t \ge k$.  There are ${d \choose k} {n-1 \choose k-1}$ partitions $(n_1,\ldots,n_d)$ with $k$ nonzero $n_i$.  Since $1 \le k \le \min(d,n)$, summing over $k$ then gives
\begin{equation}
\label{eq:chit-dcpi}
\inner{d^{c(\pi)}}{\chi_t}
= \sum_k {d \choose k} {n-1 \choose k-1} K^{\Lambda_t}_N 
= \sum_{k=1}^{\min(d,n)} {d \choose k} {n-1 \choose k-1} {k-1 \choose t} \, .
\end{equation}

We can now calculate the inner product $\inner{P_n}{d^{c(\cdot)}}$.  Combining~\eqref{eq:chit-p} and~\eqref{eq:chit-dcpi} and summing over $t$, we have
\begin{align*}
n! \inner{P_n}{d^{c(\cdot)}} 
&= n! \sum_{t=0}^{n-1} \inner{P * P}{\chi_t} \inner{d^{c(\cdot)}}{\chi_t} = \sum_{t=0}^{n-1} \sum_{k=1}^{\min(d,n)} 
{d \choose k} {n-1 \choose k-1} {k-1 \choose t} \Big\slash {n-1 \choose t} \\
&= \sum_{k=1}^{\min(d,n)}  {d \choose k} \sum_{t=0}^{k-1} {n-t-1 \choose n-k} = \sum_{k=1}^{\min(d,n)}  {d \choose k} {n \choose k-1} = {n+d \choose n+1} - {d \choose n+1} \, ,
\end{align*}
where ${d \choose n+1}=0$ if $d \le n$.  Combining this with~\eqref{eq:sd-inner} completes the proof.
\end{proof}

\section{The second moment in the unsymmetrized case}
\label{sec:second-unsym}

Squaring~\eqref{eq:det-expansion}---and, for aesthetic reasons, placing the conjugated $\rho$s in the second half of the expression and changing the names of the permutations---gives
\begin{equation}
\label{eq:xunsym2}
\Xunsym^2 = \sum_{\kappa,\lambda,\mu,\nu \vdash A} 
(-1)^{\kappa \lambda \mu \nu} 
\left( \tr \prod_i \rho_{i, \kappa i} \right)
\left( \tr \prod_i \rho_{i, \lambda i} \right) 
\left( \tr \prod_i \rho^*_{i, \mu i} \right)
\left( \tr \prod_i \rho^*_{i, \nu i} \right) 
\, .
\end{equation}
Now we take the expectation over the $\rho_{ij}$.  As before, the only terms of this sum that contribute to this expectation are those in which each $\rho_{ij}$ appears an even number of times.  Moreover, each $\rho_{ij}$ must appear an equal number of times conjugated (in the first and second products) and unconjugated (in the third and fourth products), since $\Exp_\sigma[\sigma \otimes \sigma]=0$.  In the Gaussian measure, this is because $\Exp[(\sigma^i_j)^2]=0$ if $\sigma^i_j$ is chosen from the Gaussian distribution on $\C$.  In the Haar measure, the same thing is true because the tensor square $\sigma \otimes \sigma$ of the defining representation of $\U(d)$ contains no copies of the trivial representation.

%In any case, if any $\rho_{ij}$ appears thrice then some other one appears once.

For each term of~\eqref{eq:xunsym2}, associated with a tuple $(\kappa, \lambda, \mu, \nu)$, we express the total number of occurrences of each $\rho_{ij}$ with an $n \times n$ matrix $C_{ij}$.  In light of the discussion above, for the terms that contribute to the second moment we have $C_{ij}=0$ if $A_{ij}=0$, $C_{ij} \in \{2,4\}$ if $A_{ij}=1$, and $\sum_i C_{ij} = \sum_j C_{ij} = 2n$.  We will denote these conditions as $C \vdash A$.  As in~\cite{karmarkar-etal,chien-rasmussen-sinclair}, we think of $C$ as a ``double cycle cover'' of the bipartite graph described by $A$.  This graph has $n$ vertices on either side, and an edge between the $i$th vertex on the left and the $j$th vertex on the right if and only if $A_{ij}=1$.  Each vertex has degree $2$ or $4$ in $C$.  Thus $C$ consists of cycles where each edge is covered twice, and possibly some isolated edges which are covered four times.  

We then write the second moment as a sum, over all $C$, of the quadruples such that $(\kappa, \lambda, \mu, \nu) \vdash C$, where this denotes the following relation:
\begin{align*}
(\kappa, \lambda, \mu, \nu) \vdash C \Leftrightarrow \;
&\text{$\pi \vdash A$ for all $\pi \in \{\kappa, \lambda, \mu, \nu\}$} \\
&\text{and} \; \abs{\{ \pi \in \{\kappa, \lambda\} \mid j = \pi i \}} 
 = \abs{\{ \pi \in \{\mu, \nu\} \mid j = \pi i \}}  \; \text{for all} \; i,j \in \{1,\ldots,n\}\, ,\\
&\text{and} \; \abs{\{ \pi \in \{\kappa, \lambda, \mu, \nu\} \mid j = \pi i \}} = C_{ij} \; \text{for all} \; i,j \in \{1,\ldots,n\} \, . 
\end{align*}
In our discussion below, we will treat each $(\kappa, \lambda, \mu, \nu)$ as a ``coloring'' of $C$.  Each double edge is colored $(\kappa,\mu)$, $(\kappa,\nu)$, $(\lambda,\mu)$, or $(\lambda,\nu)$, indicating some pair $\rho_{ij}, \rho^*_{ij}$ appearing in the first and third products, or the first and fourth, and so on.  Each cycle in $C$ must alternate between $(\kappa,\mu)$ and $(\lambda,\nu)$ or between $(\kappa,\nu)$ and $(\lambda,\mu)$.  The isolated edges in $C$ bear all four colors, indicating that some $\rho_{ij}$ appears in all four products.  
We observe that for those tuples that contribute to the second moment, the parity $(-1)^{\kappa \lambda \mu \nu}$ is always 1.

\begin{lemma}
\label{lem:parity}
If $(\kappa, \lambda, \mu, \nu) \vdash C$ for some $C$, then $(-1)^{\kappa \lambda \mu \nu} = 1$.
\end{lemma}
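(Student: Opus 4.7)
The plan is to translate the hypothesis $(\kappa,\lambda,\mu,\nu)\vdash C$ into a pointwise constraint on the four permutations and then exhibit a pair of mutually inverse permutations whose signs cancel in the product. Unpacking the second clause of the definition of $\vdash C$ (the one requiring equality of the $\{\kappa,\lambda\}$ and $\{\mu,\nu\}$ incidence counts on each cell) gives, for every $i$, the multiset identity
\[
\{\kappa(i),\lambda(i)\}=\{\mu(i),\nu(i)\}.
\]
Partition $\{1,\ldots,n\}=S\sqcup T$ where $S=\{i:\mu(i)=\kappa(i)\}$ and $T$ is its complement. On $S$ the multiset identity forces $\nu(i)=\lambda(i)$; on $T$ we must have $\kappa(i)\neq\lambda(i)$ together with $\mu(i)=\lambda(i)$ and $\nu(i)=\kappa(i)$. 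Informally, $(\mu,\nu)$ is obtained from $(\kappa,\lambda)$ by swapping the two values at each index in $T$.

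Next, I would study the auxiliary permutations $\pi:=\kappa^{-1}\mu$ and $\pi':=\lambda^{-1}\nu$. Both fix $S$ pointwise, and on $T$ they act as $\pi(i)=\kappa^{-1}\lambda(i)$ and $\pi'(i)=\lambda^{-1}\kappa(i)$, which are inverse to one another as functions on $T$. Because $\mu$ is a permutation of $\{1,\ldots,n\}$ that agrees with $\kappa$ on $S$ and with $\lambda$ on $T$, its image being all of $\{1,\ldots,n\}$ forces $\lambda(T)=\kappa(T)$; hence $\pi$ and $\pi'$ are genuine permutations of $T$ (extended by the identity on $S$) and are inverses of each other. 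Since every permutation has the same sign as its inverse,
\[
(-1)^{\kappa^{-1}\mu} = (-1)^{\lambda^{-1}\nu},
\quad\text{i.e.,}\quad
(-1)^{\kappa}(-1)^{\mu}=(-1)^{\lambda}(-1)^{\nu}.
\]
Multiplying both sides by $(-1)^{\mu}(-1)^{\nu}$ yields $(-1)^{\kappa\lambda\mu\nu}=\bigl((-1)^{\mu}(-1)^{\nu}\bigr)^{2}=1$, as desired.

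I do not anticipate a real obstacle. The only point that deserves explicit verification is the equality $\kappa(T)=\lambda(T)$, which is what guarantees $\pi$ and $\pi'$ restrict to permutations of $T$ in the first place; but this follows immediately from the fact that $\mu$ (and $\nu$) are themselves permutations of $\{1,\ldots,n\}$. Conceptually, the content of the lemma is that the multiset-matching condition forces $(\mu,\nu)$ to differ from $(\kappa,\lambda)$ by a ``swap'' supported on $T$, and any such swap factors through two mutually inverse permutations whose signs cancel in the overall product.
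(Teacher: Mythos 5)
Your proof is correct, and it reaches the same ultimate identity as the paper---that $\kappa^{-1}\mu\,\lambda^{-1}\nu$ is the identity permutation (your statement that $\kappa^{-1}\mu$ and $\lambda^{-1}\nu$ are mutually inverse is literally equivalent)---but by a genuinely different route. The paper organizes the argument around the double cycle cover $C$: it observes that each cycle of $C$ alternates between the colors $(\kappa,\mu)$ and $(\lambda,\nu)$ or between $(\kappa,\nu)$ and $(\lambda,\mu)$, and that isolated edges carry all four colors, so that restricted to the left vertices of each piece one of the pairings $\kappa=\mu,\ \lambda=\nu$ or $\kappa=\nu,\ \lambda=\mu$ holds, from which $\kappa^{-1}\mu\lambda^{-1}\nu=1$ follows piece by piece. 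You instead work index by index: the multiset condition $\{\kappa(i),\lambda(i)\}=\{\mu(i),\nu(i)\}$ alone yields the partition $S\sqcup T$, and the bijectivity of $\mu$ gives $\kappa(T)=\lambda(T)$, so that $\kappa^{-1}\mu$ and $\lambda^{-1}\nu$ are inverse permutations supported on $T$. This is slightly more economical: it never invokes the cycle structure of $C$ (cycles versus isolated edges, alternation of colors), only the second clause of the relation $(\kappa,\lambda,\mu,\nu)\vdash C$, and so it makes explicit that the parity cancellation is a consequence of the ``conjugate-balancing'' condition alone. The paper's cycle-based phrasing, on the other hand, dovetails with the coloring language used throughout Section~\ref{sec:second-unsym}, which is why the authors present it that way. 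All the small verifications in your writeup (that $\nu=\lambda$ on $S$, that $\mu=\lambda$ and $\nu=\kappa$ on $T$, and that $\pi,\pi'$ genuinely permute $T$) check out.
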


\begin{proof}
Observe that $(-1)^{\kappa \lambda \mu \nu} = (-1)^\pi$ where $\pi = \kappa^{-1} \mu \lambda^{-1} \nu$.  
%, as $\pi \mapsto (-1)^{\pi}$ is a homomorphism into $\Z_2$, an abelian group in which every element satisfies $x = x^{-1}$. 
We claim that the constraints we describe above imply that $\pi=1$.  Consider a cycle $c$ of $C$ on the bipartite graph defined by $A$.  We can view $\kappa, \lambda, \mu, \nu$ as one-to-one mappings from the $n$ vertices on the left side to the $n$ vertices on the right.  If $c$ alternates between $(\kappa, \mu)$ and $(\lambda, \nu)$, then restricting to the vertices on the left side of $c$ we have $\kappa = \mu$ and $\lambda=\nu$.  Similarly, if $c$ alternates between $(\kappa, \nu)$ and $(\lambda, \mu)$, then restricting to these vertices gives $\kappa = \nu$ and $\lambda = \mu$.  Finally, for an isolated edge we have $\kappa = \lambda = \mu = \nu$ when restricted to its left endpoint.  In all cases we have $\kappa^{-1} \mu \lambda^{-1} \nu=1$.
\end{proof}

\noindent
Thus the second moment of the unsymmetrized estimator can be written
\begin{equation}
\label{eq:second-moment-unsym}
\Exp[\Xunsym^2] = \sum_{C \vdash A} \sum_{(\kappa,\lambda,\mu,\nu) \vdash C} 
\Exp_{\{\rho_{ij}\}}
\left( \tr \prod_i \rho_{i, \kappa i} \right)
\left( \tr \prod_i \rho_{i, \lambda i} \right) 
\left( \tr \prod_i \rho^*_{i, \mu i} \right)
\left( \tr \prod_i \rho^*_{i, \nu i} \right) 
\, .
\end{equation}

Many terms in this expectation can be evaluated using the same picture we gave for the expectation.  Each pair $\rho_{ij}, \rho^*_{ij}$ creates a cupcap matching a pair of indices in one of the first two products with a pair in one of the second two products.  However, the isolated edges in $C$ correspond to a fourth-order operator $\Exp_\sigma (\sigma \otimes \sigma \otimes \sigma^* \otimes \sigma^*)$ which we calculate in the following lemma.

\begin{lemma}
\label{lem:fourth}
If $\sigma$ is chosen according to the Gaussian measure, then
\begin{equation}
\label{eq:cupcap-coord-gaussian}
\Exp_\sigma \left[ \sigma \otimes \sigma \otimes \sigma^* \otimes \sigma^* \right]^{ikmp}_{j\ell nq} 
= \frac{1}{d^2} \left( 
\delta^{im} \delta_{jn} \delta^{kp} \delta_{\ell q} 
+ \delta^{ip} \delta_{jq} \delta^{km} \delta_{\ell n} 
\right) \, ,
\end{equation}
or diagrammatically, 
\begin{equation}
\label{eq:cupcap-diag-gaussian}
\Exp_\sigma \left[ \sigma \otimes \sigma \otimes \sigma^* \otimes \sigma^* \right]
= \frac{1}{d^2} \left( \, \cupcapcross + \cupcapnest \, \right)  \, .
\end{equation}
If $\sigma$ is chosen according to the Haar measure, then
\begin{equation}
\label{eq:cupcap-diag-haar}
\frac{1-O(1/d)}{d^2} \left( \, \cupcapcross + \cupcapnest \, \right) \; \preceq \;
\Exp_\sigma \left[ \sigma \otimes \sigma \otimes \sigma^* \otimes \sigma^* \right]
\; \preceq \; 
\frac{1+O(1/d)}{d^2} \left( \, \cupcapcross + \cupcapnest \, \right)  \, ,
\end{equation}
where we write $A \preceq B$ if $B-A$ is positive semidefinite.
\end{lemma}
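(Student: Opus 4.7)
In the Gaussian case this is a direct application of Isserlis's theorem. Since the entries $\sigma^i_j$ are independent circularly symmetric complex Gaussians with $\Exp[(\sigma^i_j)^2]=0$ and $\Exp[\sigma^i_j(\sigma^k_\ell)^*]=(1/d)\delta^{ik}\delta_{j\ell}$, among all pairings of the four factors in $\Exp[\sigma^i_j\sigma^k_\ell(\sigma^m_n)^*(\sigma^p_q)^*]$ only the two pairings that match each unconjugated factor with a conjugated one survive, and these produce exactly the two terms of \eqref{eq:cupcap-coord-gaussian}.

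For the Haar case I would invoke the Weingarten calculus for $U(d)$ at order two, which reads
\begin{equation*}
\Exp_\sigma[\sigma^{i_1}_{j_1}\sigma^{i_2}_{j_2}(\sigma^{k_1}_{\ell_1})^*(\sigma^{k_2}_{\ell_2})^*] = \sum_{\alpha,\beta\in S_2} \mathrm{Wg}(\alpha\beta^{-1},d)\,\delta^{i_1 k_{\alpha(1)}}\delta^{i_2 k_{\alpha(2)}}\delta_{j_1 \ell_{\beta(1)}}\delta_{j_2 \ell_{\beta(2)}},
\end{equation*}
with $\mathrm{Wg}(e,d)=1/(d^2-1)$ and $\mathrm{Wg}((1\,2),d)=-1/(d(d^2-1))$. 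Setting the vectors $|x\rangle=\sum_{i,k}e_i\otimes e_k\otimes e_i\otimes e_k$ and $|y\rangle=\sum_{i,k}e_i\otimes e_k\otimes e_k\otimes e_i$, a direct inspection of indices shows that the diagonal pairs $(e,e)$ and $((1\,2),(1\,2))$ contribute $|x\rangle\langle x|$ and $|y\rangle\langle y|$, which are exactly the cross and nest diagrams, while the off-diagonal pairs contribute $|x\rangle\langle y|$ and $|y\rangle\langle x|$, which are the two mixed diagrams.

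The PSD sandwich then follows from the elementary operator inequality
\begin{equation*}
\pm\bigl(|x\rangle\langle y|+|y\rangle\langle x|\bigr) \preceq |x\rangle\langle x|+|y\rangle\langle y|,
\end{equation*}
an immediate consequence of $(|x\rangle\pm|y\rangle)(\langle x|\pm\langle y|)\succeq 0$. Combining this with the Weingarten coefficients gives scalar bounds
\begin{equation*}
\frac{1}{d^2-1}\mp\frac{1}{d(d^2-1)} = \frac{1\pm O(1/d)}{d^2},
\end{equation*}
and these, paired with the operator inequality, yield \eqref{eq:cupcap-diag-haar}. There is no substantial technical obstacle; the step requiring genuine care is the identification of the four Weingarten diagrams as the outer products built from the two vectors $|x\rangle$ and $|y\rangle$, after which the PSD sandwich collapses to a one-line Cauchy--Schwarz argument.
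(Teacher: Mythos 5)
Your proposal is correct, and for the Gaussian case it is the same Wick-pairing computation as the paper's. For the Haar case the two arguments are really the same calculation in different packaging: the order-2 Weingarten formula you invoke, with $\mathrm{Wg}(e,d)=1/(d^2-1)$ and $\mathrm{Wg}((1\,2),d)=-1/(d(d^2-1))$, is exactly what the paper derives by hand by decomposing $\sigma\otimes\sigma$ into its symmetric and antisymmetric irreducibles and applying Schur's lemma to each block of $\tau_\pm\otimes\tau_\pm^*$; both routes land on the identical exact expression $\frac{1}{d^2-1}\bigl(\cupcapcross+\cupcapnest\bigr)-\frac{1}{d(d^2-1)}\bigl(\cupcapmixtop+\cupcapmixbottom\bigr)$. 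Where you genuinely diverge is the last step: the paper observes that this operator is the orthogonal projection onto the span of the two (non-orthogonal, $\langle \vec{u},\vec{v}\rangle=1/d$) rank-one directions and sandwiches that projection between $\frac{1}{1\pm 1/d}(\Pi_{\vec{u}}+\Pi_{\vec{v}})$ by an eigenvalue computation, whereas you bound the cross terms directly via $\pm(\vec{u}\vec{v}^\dagger+\vec{v}\vec{u}^\dagger)\preceq \vec{u}\vec{u}^\dagger+\vec{v}\vec{v}^\dagger$ and absorb the result into the scalar coefficients, getting $\frac{d\mp1}{d(d^2-1)}=\frac{1\mp O(1/d)}{d^2}$. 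Your version of this step is slightly more elementary (it never needs to recognize the operator as a projection) and gives the same constants; the paper's version buys the structural observation that the expectation is precisely $\Pi_{\vec{u},\vec{v}}$, which is not needed elsewhere. Either argument is complete; if you cite Weingarten rather than deriving it, you should at least note that the $d=1$ degeneracy ($d^2-1=0$) is harmless since the bound is asymptotic in $d$.
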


\begin{proof}
We have
\[
\Exp_\sigma \left[ \sigma \otimes \sigma \otimes \sigma^* \otimes \sigma^* \right]^{ikmp}_{j\ell nq} 
%\left( \sigma \otimes \sigma \otimes \sigma^* \otimes \sigma^* \right)^{ikmp}_{j\ell nq} 
= \Exp\left[ \sigma^i_j \sigma^k_\ell (\sigma^m_n)^* (\sigma^p_q)^* \right]\, .
\]
In the Gaussian measure, if $i=m$, $j=n$, $k=p$, and $\ell=q$, but $i \ne k$ or $j \ne \ell$, this gives $\bigl| \sigma^i_j\bigr|^2 \bigl|\sigma^k_\ell\bigr|^2 = 1/d^2$.  If $i=p$, $j=q$, $k=m$, and $\ell=n$, but $i \ne k$ or $j \ne \ell$, we get the same result.  Finally, if $i=k=m=p$ and $j=\ell=n=p$, we get $\Exp \bigl| \sigma^i_j\bigr|^4 = 2/d^2$.

In the Haar measure, analogous to Lemma~\ref{lem:cupcap} we will calculate the expectation of $\sigma \otimes \sigma \otimes \sigma^* \otimes \sigma^*$ by considering tensor powers of the defining representation $\sigma$ of $\U(d)$.  The tensor square $\sigma \otimes \sigma$ decomposes into symmetric and antisymmetric subspaces, each of which is irreducible:
\[
\sigma \otimes \sigma = \tau_\sym \oplus \tau_\asym \, .
\]
The dimension of $\tau_\pm$ is $d_\pm = (d^2 \pm d)/2$.  We can write the projection operators onto $\tau_\pm$ in terms of the exchange operator $\exchange$ which reverses the order of the tensor product, and the identity $\identity$ \,:
\[
%\Pi_\sym = \frac{1}{2} \left( \identity + \exchange \right) \; , \quad
%\Pi_\asym = \frac{1}{2} \left( \identity - \exchange \right) \, .
\Pi_\pm = \frac{1}{2} \left( \, \identity \pm \exchange \, \right) \, .
\]
Now writing $\sigma \otimes \sigma \otimes \sigma^* \otimes \sigma^*
= 
( \tau_\sym \oplus \tau_\asym ) \otimes 
( \tau^*_\sym \oplus \tau^*_\asym )$,
%and applying Schur's lemma, 
the expectation over $\sigma$ is the projection operator onto the trivial subspaces of $\tau_\sym \otimes \tau^*_\sym$ and $\tau_\asym \otimes \tau^*_\asym$:
\begin{equation}
\label{eq:fourth1}
\Exp_\sigma \left[ \sigma \otimes \sigma \otimes \sigma^* \otimes \sigma^* \right]
= \Pi^{\tau_\sym \otimes \tau^*_\sym}_\one \oplus
\Pi^{\tau_\asym \otimes \tau^*_\asym}_\one \, .
\end{equation}
Analogous to~\eqref{eq:cupcap}, we have the handsome 
%\footnote{In light of~\eqref{eq:cupcap}, it's clear that the image of the projection operator $$\frac{1}{d^2} \cupcapcross$$ is a copy of the trivial representation in $\sigma \otimes \sigma \otimes \sigma^* \otimes \sigma^*$. As the isotypic space corresponding to the trivial representation is spanned by the images of $\Pi_\one^{\tau_\pm \otimes \tau_\pm}$, we have 
%$$
%\frac{1}{d^2} \cupcapcross = a_+ \Pi_\one^{\tau_+ \otimes \tau_+} + a_- \Pi_\one^{\tau_- \otimes \tau_-}
%$$
%which yields the equality but for the values of $a_\pm$ which can be determined by, e.g., computing the square of the right hand side of~\eqref{eq:isotypic-1}.}
\begin{equation}
\label{eq:isotypic-1}
\Pi^{\tau_\sym \otimes \tau^*_\sym}_\one = 
(\Pi_\sym \otimes \Pi_\sym)
\cdot \left( \frac{1}{d_\sym} \cupcapcross \, \right)
\cdot (\Pi_\sym \otimes \Pi_\sym)
\end{equation}
and similarly for $\Pi^{\tau_\asym \otimes \tau^*_\asym}_\one$.  Putting these diagrams together with~\eqref{eq:fourth1} gives
\begin{align}
\Exp_\sigma \left[ \sigma \otimes \sigma \otimes \sigma^* \otimes \sigma^* \right]
&= (\Pi_\sym \otimes \Pi_\sym)
\cdot \left( \frac{1}{d_\sym} \cupcapcross \, \right)
\cdot (\Pi_\sym \otimes \Pi_\sym)
+ (\Pi_\asym \otimes \Pi_\asym)
\cdot \left( \frac{1}{d_\asym} \cupcapcross \, \right)
\cdot (\Pi_\asym \otimes \Pi_\asym) 
\nonumber \\
&= \frac{1}{4d_\sym} \left( \, \cupcapcross + \cupcapmixbottom + \cupcapmixtop + \cupcapnest \, \right) 
+ \frac{1}{4d_\asym} \left( \, \cupcapcross - \cupcapmixbottom - \cupcapmixtop + \cupcapnest \, \right) 
\nonumber \\
&= \frac{1}{d^2-1} 
\left( \cupcapcross + \cupcapnest - \frac{1}{d} \left( \, \cupcapmixtop + \cupcapmixbottom \, \right) \right)
\, . 
\label{eq:fourth2}
\end{align}

One can check that~\eqref{eq:fourth2} is the projection operator onto the two-dimensional subspace spanned by the images of 
\[
\cupcapcross \quad \text{and}\quad \cupcapnest \, ,
\]
that is, the vectors $\vec{u} = \frac{1}{d} \sum_{i,j} (i,j,i,j)$ and $\vec{v}=\frac{1}{d} \sum_{i,j} (i,j,j,i)$.  
In general, given two real-valued vectors $\vec{u}$ and $\vec{v}$ of norm $1$, let $\Pi_{\vec{u}}$ and $\Pi_{\vec{v}}$ denote the projection operators onto the subspaces parallel to them, and let $\Pi_{\vec{u},\vec{v}}$ be the projection operator onto the two-dimensional subspace they span.  Then 
\[
\frac{1}{1 + |\langle \vec{u}, \vec{v}\rangle|} (\Pi_{\vec{u}} + \Pi_{\vec{v}}) \; \preceq \; \Pi_{\vec{u},\vec{v}} \;\preceq\; \frac{1}{1-\abs{\inner{\vec{u}}{\vec{v}}}} \left( \Pi_{\vec{u}} + \Pi_{\vec{v}} \right) \, ,
\]
where we write $A \preceq B$ if $B-A$ is positive semidefinite.  To see this, note that the eigenvectors of $\Pi_{\vec{u}} + \Pi_{\vec{v}}$ are $\vec{u} \pm \vec{v}$, with eigenvalues $\lambda_\pm = 1 \pm \inner{\vec{u}}{\vec{v}}$, while their eigenvalues with respect to $\Pi_{\vec{u},\vec{v}}$ are $1$.  
In this case, we have $\inner{\vec{u}}{\vec{v}} = 1/d$ and 
\[
\Pi_{\vec{u}} = \frac{1}{d^2} \cupcapcross \quad \text{and} \quad \Pi_{\vec{v}} = \frac{1}{d^2} \cupcapnest \, .
\]
Thus~\eqref{eq:fourth2} becomes
\begin{equation*}
\left( \frac{1}{1+1/d} \right) \frac{1}{d^2} \left( \, \cupcapcross + \cupcapnest \, \right)
\preceq 
\Exp_\sigma \left[ \sigma \otimes \sigma \otimes \sigma^* \otimes \sigma^* \right]
 \preceq
\left( \frac{1}{1-1/d} \right) \frac{1}{d^2} \left( \, \cupcapcross + \cupcapnest \, \right)\, ,
\end{equation*}
completing the proof.
\end{proof}

The operator $\cupcapcross$ corresponds to the coloring $(\kappa,\mu), (\lambda, \nu)$, in which some $\rho_{ij}$ appears in the first and third products, and another $\rho'_{ij}$ appears in the second and fourth.  Similarly, the operator $\cupcapnest$ corresponds to the coloring $(\kappa,\nu), (\lambda,\mu)$, in which $\rho_{ij}$ appears in the first and fourth products and $\rho'_{ij}$ appears in the second and third.  
Thus Lemma~\ref{lem:fourth} tells us that, with a multiplicative cost of $1+O(1/d)$ per isolated edge in the Haar measure, we can replace a given isolated edge in $C$ with an (unordered) pair of edges.  This pair can be colored in two ways: with $(\kappa,\mu)$ and $(\lambda,\nu)$, or with $(\kappa,\nu)$ and $(\lambda,\mu)$.  Equivalently, we can ``decouple'' each quadruple product $\rho \otimes \rho \otimes \rho^* \otimes \rho^*$ into the sum of two combinations of tensor products, 
\begin{equation}
\label{eq:decouple}
\rho \otimes \rho \otimes \rho^* \otimes \rho^*
\approx 
\rho' \otimes \rho'' \otimes {\rho'}^* \otimes {\rho''}^*
+ \rho' \otimes \rho'' \otimes {\rho''}^* \otimes {\rho'}^* \, , 
\end{equation}
where $\rho'$ and $\rho''$ are chosen independently.

Next we explore the set of $(\kappa,\lambda,\mu,\nu)$ corresponding to a given $C$, or equivalently the set of colorings of $C$.  
We will call a coloring \emph{pure} if every edge in $C$ are colored $(\kappa,\mu)$ or $(\lambda,\nu)$.  This corresponds to pairing the $\rho_{ij}$s in the first product in~\eqref{eq:second-moment-unsym} with their conjugates in the third, and those in the second product with their conjugates in the fourth---and choosing the first term in~\eqref{eq:decouple} for each $\rho_{ij}$ which appears in all four products.  Each cycle in $C$ has two pure colorings, and each isolated edge has one.  Thus the number of pure colorings of $C$ is $2^{t(C)}$ where $t(C)$ is the number of cycles in $C$.  A well-known bijection shows that $(\perm A)^2$ can be written as a sum over cycle covers of the bipartite graph defined by $A$,
\begin{equation}
\label{eq:perm-cycle-sum}
(\perm A)^2 = \sum_{C \vdash A} 2^{t(C)} \, , 
\end{equation}
or equivalently that $(\perm A)^2$ is the total number of pure colorings.  Combining this with~\eqref{eq:exp-unsym}, we have
\begin{equation}
\label{eq:unsym-expsquared}
\Exp[\Xunsym]^2 = \sum_{C \vdash A} \sum_{\substack{(\kappa,\lambda,\mu,\nu) \vdash C \\ {\rm pure}}} 1 \, .
\end{equation}

On the other hand, we can associate each coloring with a pure one, say by replacing the color $(\kappa,\nu)$ with $(\kappa,\mu)$ and $(\lambda,\mu)$ with $(\lambda,\nu)$ on each edge.  If this converts a tuple of permutations $(\kappa,\lambda,\mu',\nu')$ to a tuple $(\kappa,\lambda,\mu,\nu)$ corresponding to a pure coloring, we will write $(\mu',\nu') \vdash (\kappa,\lambda,\mu,\nu)$.  Then, at the risk of some notational overload, we write~\eqref{eq:second-moment-unsym} as a sum over pure colorings:
\begin{equation*}
%\label{eq:second-moment-unsym-pure}
\Exp[\Xunsym^2] = \sum_{C \vdash A} 
\sum_{\substack{\; (\kappa,\lambda,\mu,\nu) \vdash C \; \\ {\rm pure}}}
\sum_{(\mu',\nu') \vdash (\kappa,\lambda,\mu,\nu)} 
\Exp_{\{\rho_{ij}\}}
\left( \tr \prod_i \rho_{i, \kappa i} \right)
\left( \tr \prod_i \rho_{i, \lambda i} \right) 
\left( \tr \prod_i \rho^*_{i, \mu i} \right)
\left( \tr \prod_i \rho^*_{i, \nu i} \right).
\end{equation*}

Now, analogous to~\cite{karmarkar-etal}, we bound the critical ratio $\Exp[\Xunsym^2]/\Exp[\Xunsym]^2$ as the maximum ratio between corresponding terms in these two sums, associated with some pure coloring of some cycle cover.  The worst possible case is when $C$ consists entirely of isolated edges, since in that case we can switch the colors on each edge independently, giving $2^n$ colorings for the single pure one.

\begin{figure}
\centering
\includegraphics[width=\columnwidth]{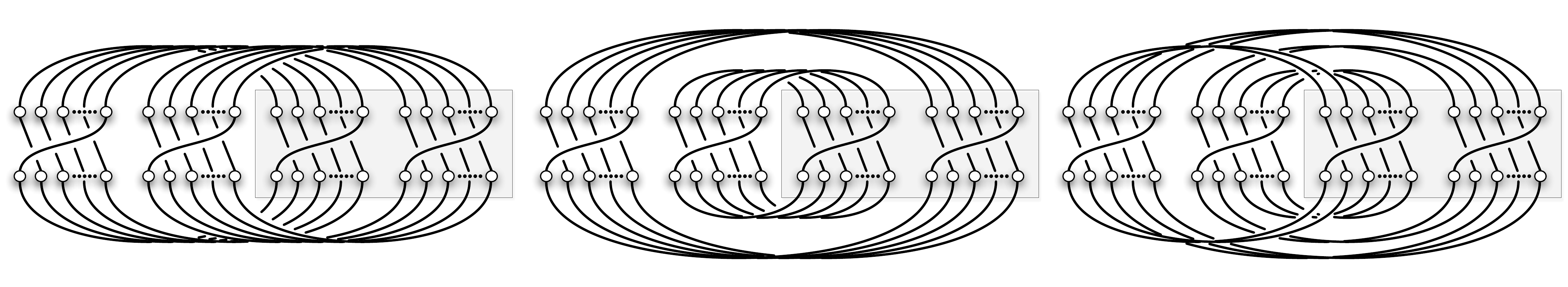}
\caption{Terms corresponding to a given cycle cover $C$, where the $\rho$s in the gray box are conjugated and $n=5$.  Left, a pure coloring, which has $2n$ loops.  Middle, a maximally impure coloring, which also has $2n$ loops.  Right, a mixed coloring corresponding to the string $s=00111$. 
%, which has fewer than $2n$ loops.  
A careful inspection shows that it has $8$ loops: $6$ of length $4$, and $2$ of length $8$.}
\label{fig:unsymmetrized-pure}
\end{figure}

We can parametrize these $2^n$ colorings by strings $s \in \{0,1\}^n$, where $s_i=0$ if the coloring of the $i$th edge is pure, and $1$ if its colors are switched.  This produces diagrams such as those shown in Fig.~\ref{fig:unsymmetrized-pure}, weaving a total of $8n$ vertices together.  As in our calculation of the expectation, the corresponding product of traces is $d^{c-2n}$ where $c$ is the number of loops in this diagram.  

Both the pure and ``completely impure'' colorings $0^n$ and $1^n$---where the $\rho$s in the first product are all paired with those in the third or fourth respectively, and the those in the second product are all paired with those in the fourth or third---have $2n$ loops.  In general, the number of loops is $2n$ minus the number of times $s$ switches back and forth between $0$ and $1$ when $s$ is arranged cyclically.  Specifically, there are two loops of length $4$ for each $i$ where $s_i=s_{(i+1)\bmod n}$, and a cycle of length $8$ for each $i$ where $s_i \ne s_{(i+1)\bmod n}$.  

For each even $i$ with $0 \le i \le n$, there are $2 \binom{n}{i}$ strings which switch back and forth  $i$ times.  Therefore, combined with Lemma~\ref{lem:fourth}, we have (for a cycle  cover $C$ consisting of $n$ isolated edges)
\begin{gather*}
\sum_{(\mu',\nu') \vdash (\kappa,\lambda,\mu,\nu)} 
\Exp_{\{\rho_{ij}\}}
\left( \tr \prod_i \rho_{i, \kappa i} \right)
\left( \tr \prod_i \rho_{i, \lambda i} \right) 
\left( \tr \prod_i \rho^*_{i, \mu i} \right)
\left( \tr \prod_i \rho^*_{i, \nu i} \right)  \\
= \left( 1+O\!\left( \frac{1}{d} \right) \right)^n \times 
2 \!\!\sum_{i=0,2,4,\ldots}^n {n \choose i} d^{-i} 
%\\
= \left( 1+O\!\left( \frac{1}{d} \right) \right)^n \times \left( \left( 1+\frac{1}{d} \right)^n + \left( 1-\frac{1}{d} \right)^n \right) 
\end{gather*}
In the Gaussian measure, this expression is exact if we remove the prefactor $(1+O(1/d))^n$; but in any case, we get a bound $(1+O(1/d))^n$ in either measure.  Combining this with~\eqref{eq:unsym-expsquared} 
completes the first part of the proof of Theorem~\ref{thm:main}.

\section{The second moment in the symmetrized case}
\label{sec:second-sym}

Our analysis of the second moment in the symmetrized case proceeds in two steps. We begin, as with the unsymmetrized case, by diagrammatically analyzing the relevant traces. The result is a sum over double cycle covers weighted by an exponential generating function $\sum_{\pi} d^{c(\pi)}$ over a subset of the symmetric group $S_{2n}$. We then show that an allied quantity can be analyzed, as in Lemma~\ref{lem:sd-exact}, by harmonic analysis on $S_{2n}$.

Before stating the main lemmas of this section, we introduce some further notation.  As in~\eqref{eq:perm-cycle-sum}, $t(C)$ denotes the number of cycles in $C$.  
As before, we let $r$ denote the rotation $(1, 2, \ldots, n) \in S_n$. The expression $\pi^\sigma = \sigma^{-1} \pi \sigma$ denotes conjugation, and, for two elements $\pi, \sigma \in S_n$, we let $(\pi, \sigma)$ denote the element of $S_{2n}$ given by applying $\pi$ and $\sigma$ to the first $n$ and last $n$ elements of $\{ 1, \ldots, 2n\}$, respectively. Finally, we let $w_k$ denote the involution $(1\; n+1)(2\; n+2) \cdots (k\; n+k)$ with the convention that $w_0$ is the identity.  We can then write $\Exp[\Xsym^2]$ in terms of the following quantity:
\[
a^{(2)}_d = \sum_{k=0}^n \binom{n}{k} \frac{1}{d^{2n}} 
\Exp_{\alpha, \beta, \gamma, \delta} d^{c\left( (r^{-1},r^{-1})^{(\alpha, \beta)} w_k (r,r)^{(\gamma, \delta)} w_k \right) } \, .
\]

\begin{lemma}
\label{lem:symmetrized-second-diagram}
If the $\rho_{ij}$ are drawn according to the Gaussian or Haar measure, 
\[
\frac{\Exp[\Xsym^2]}{\Exp[\Xsym]^2} \leq \left(1 + O\!\left( \frac{1}{d} \right) \right)^n \frac{a^{(2)}_d}{a_d^2} \, .
\]
\end{lemma}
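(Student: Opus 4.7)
The plan is to follow the same high-level outline as the unsymmetrized analysis in Section~\ref{sec:second-unsym}, but now additionally track the extra random permutations $\alpha,\beta,\alpha',\beta'$ introduced by the symmetrization of the determinant. First I would square the expansion \eqref{eq:sdet-expansion}, obtaining
\[
\Xsym^2 = \sum_{\kappa,\lambda,\kappa',\lambda' \vdash A} (-1)^{\kappa\lambda\kappa'\lambda'} \Exp_{\alpha,\beta,\alpha',\beta'} \prod_{(\pi,\xi) \in \{(\kappa,\alpha),(\lambda,\beta),(\kappa',\alpha'),(\lambda',\beta')\}} \!\!\!\!\!\!\tr\! \prod_i \rho^{(\epsilon)}_{\xi i,\pi \xi i},
\]
where conjugations appear on the $\lambda,\lambda'$ factors. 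Taking the expectation over the $\rho_{ij}$, exactly as in Section~\ref{sec:second-unsym}, the only nonzero terms are those where every $\rho_{ij}$ appears an equal number of times conjugated and unconjugated. These terms are indexed by double cycle covers $C \vdash A$ of the bipartite graph of $A$. The same argument as Lemma~\ref{lem:parity} shows that $(-1)^{\kappa\lambda\kappa'\lambda'}=1$ on all contributing terms, since each cycle forces pairwise equalities on the left endpoints.

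Next I would apply Lemma~\ref{lem:fourth} to decouple the fourth--order moment $\Exp[\rho \otimes \rho \otimes \rho^* \otimes \rho^*]$ at each isolated edge of $C$, absorbing the multiplicative factor $(1+O(1/d))^n$ in the Haar case. After decoupling, each pure coloring of $C$ contributes a diagram whose evaluation is $d^{c-2n}$, where $c$ is the number of loops. As in~\eqref{eq:perm-cycle-sum}, the number of pure colorings of $C$ is $2^{t(C)}$ and $\sum_{C \vdash A} 2^{t(C)} = (\perm A)^2$. Combined with $\Exp[\Xsym]^2 = a_d^2 (\perm A)^2$ from \eqref{eq:exp-sym}, this reduces the problem to bounding, for each pure coloring of each $C$, the sum over all nonpure extensions. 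Mimicking the argument for the unsymmetrized case, this ratio is maximized by the cycle cover consisting entirely of $n$ isolated edges, so the critical ratio is upper bounded by $(1+O(1/d))^n$ times the per-diagram ratio for that cover.

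For that worst-case cover, the evaluation of the diagram reduces to a computation in $S_{2n}$: the two unconjugated traces contribute the cycle structure $(r,r) \in S_{2n}$ and the two conjugated traces contribute $(r^{-1},r^{-1})$, while the row-reorderings $(\alpha,\beta)$ and $(\gamma,\delta):=(\alpha',\beta')$ on the two halves become conjugations of these $2n$-element permutations. Each isolated edge admits two decoupling choices (identity or swap) from Lemma~\ref{lem:fourth}; letting $k$ denote the number of swapped edges and using a relabeling that consolidates the swapped edges into the first $k$ positions, the swap operator becomes the involution $w_k = (1\;n{+}1)(2\;n{+}2)\cdots(k\;n{+}k)$. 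Summing over the $\binom{n}{k}$ choices of which edges to swap yields precisely the permutation product $(r^{-1},r^{-1})^{(\alpha,\beta)}\, w_k\, (r,r)^{(\gamma,\delta)}\, w_k$, and counting loops in the diagram becomes counting cycles of this permutation. Averaging over $\alpha,\beta,\gamma,\delta$ and summing over $k$ yields the expression $a^{(2)}_d$. Dividing by $a_d^2$, obtained from the normalization in the denominator, gives the claimed bound.

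The main obstacle is Step~3, the bookkeeping that identifies the worst-case diagrammatic sum with the concrete expression for $a^{(2)}_d$. In the unsymmetrized case the rows are aligned $1,\dots,n$ in all four products, so the diagram closes into a commutator of $S_n$; here each of the four products has its rows permuted independently, and one must verify that after absorbing the independent conjugations by $\alpha,\beta,\gamma,\delta$ (together with the swap involutions $w_k$ that encode the decoupling choices) the resulting closed diagram is, cycle-by-cycle, the $S_{2n}$ permutation appearing inside the cycle count in the definition of $a^{(2)}_d$. A lesser but nontrivial obstacle is justifying that the isolated-edge cover is indeed the worst case in the presence of the extra averaging, which requires showing that mixed covers $C$ with $t(C) < n$ contribute a smaller ratio relative to $2^{t(C)}$, so that the per-cover bound suffices.
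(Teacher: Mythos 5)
Your outline matches the paper's proof in all of its main ingredients: squaring \eqref{eq:sdet-expansion}, organizing the nonzero terms by double cycle covers, decoupling the fourth-order moments at isolated edges via Lemma~\ref{lem:fourth} at a cost of $(1+O(1/d))^n$, identifying the inner expectation over $\alpha,\beta,\gamma,\delta$ with the cycle count of $(r^{-1},r^{-1})^{(\alpha,\beta)}\,w\,(r,r)^{(\gamma,\delta)}\,w^{-1}$ in $S_{2n}$, and observing that after this averaging the only relevant feature of $w$ is the crossing number $k$, so that $w$ may be replaced by $w_k$. That is exactly the paper's route.

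The genuine gap is the step you yourself flag at the end: you reduce to the cover consisting of $n$ isolated edges by asserting that the per-pure-coloring contribution is \emph{maximized} there, and you leave that maximality unproven. The paper does not argue by a worst case at all. Instead, for an \emph{arbitrary} cover $C$ it bounds the number of colorings $(\kappa,\lambda,\mu,\nu)\vdash C$ with a prescribed crossing number $k_{\kappa,\nu}=k$ by $2^{t(C)}\binom{n}{k}$: the $2^{t(C)}$ accounts for the phase (parity) choices within nontrivial cycles, which do not affect $k$, while the choices of color class for each cycle and each decoupled isolated edge determine a subset of left-vertices of total size $k$, of which there are at most $\binom{n}{k}$. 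Summing over $k$ then bounds the contribution of every $C$ by $(1+O(1/d))^n\,2^{t(C)}\,a_d^{(2)}$ directly, and \eqref{eq:perm-cycle-sum} finishes the proof. This counting argument is precisely what would also certify your worst-case claim (for the all-isolated-edge cover the count $\binom{n}{k}$ is exact, so that cover saturates the bound), so your plan is repairable; but as written, the reduction to a single cover is the missing step, and without it the appearance of the weights $\binom{n}{k}$ in $a_d^{(2)}$ for general $A$ is unjustified. A secondary, smaller imprecision: before the decoupling and averaging, the contribution of a coloring of a general $C$ is not a single diagram valued $d^{c-2n}$ but an expectation over $\alpha,\beta,\gamma,\delta$ of such diagrams; your later discussion makes clear you understand this, but the sentence introducing ``$d^{c-2n}$ per pure coloring'' should be phrased as the conditional value for fixed $\alpha,\beta,\gamma,\delta$.
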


We delay the proof of Lemma~\ref{lem:symmetrized-second-diagram} just long enough for some comforting words regarding the major remaining obstacle: estimating $a^{(2)}_d$. 
While we do not have a simple, exact expression for $a^{(2)}_d$, we \emph{can} control a larger quantity, 
$$
\tilde{a}^{(2)}_d = \sum_{k=0}^n \binom{n}{k}^2 \frac{1}{d^{2n}} 
\Exp_{\alpha, \beta, \gamma, \delta} d^{c\left( (r^{-1},r^{-1})^{(\alpha, \beta)} w_k (r,r)^{(\gamma, \delta)} w_k \right)} \, ,
$$
in which the $k$th term of the sum is graced with an extra factor of $\binom{n}{k}$. With this reweighting we can analyze $\tilde{a}_d^{(2)}$ in terms of the Fourier expansions of the class function $d^{c(\cdot)}$, determined by the Kostka numbers, and the convolution square of the conjugacy class $\{(r,r)^\sigma \mid \sigma \in S_{2n}\}$, determined by the Murnaghan-Nakayama rule. This results in the following bound.

\begin{lemma}
\label{lem:a-2-hat}
With notation as above,
\begin{equation*}
\frac{1}{\binom{n}{n/2}} \cdot \tilde{a}_d^{(2)} \; \leq\; a^{(2)}_d \;\leq\; \tilde{a}_d^{(2)}
\end{equation*}
and
$$
\frac{1}{d^{2n}} \binom{2n}{n} \binom{2n+d-1}{2n} \:\le\: \tilde{a}^{(2)}_d\: \le\: \frac{4n^2}{d^{2n}} \binom{2n}{n} \binom{2n + d - 1}{2n}\, .
$$ 
\end{lemma}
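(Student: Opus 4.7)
The first inequality is a term-by-term comparison. Each summand $\Exp_{\alpha,\beta,\gamma,\delta}\,d^{c((r^{-1},r^{-1})^{(\alpha,\beta)} w_k (r,r)^{(\gamma,\delta)} w_k)}$ is nonnegative, and the defining sums for $a^{(2)}_d$ and $\tilde{a}^{(2)}_d$ differ only in that the $k$th term carries weight $\binom{n}{k}$ in the former and $\binom{n}{k}^2$ in the latter. Since $1 \le \binom{n}{k} \le \binom{n}{n/2}$, the termwise ratio lies in $[1/\binom{n}{n/2},\,1]$, yielding $\tilde{a}^{(2)}_d/\binom{n}{n/2}\le a^{(2)}_d \le \tilde{a}^{(2)}_d$.

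For the bounds on $\tilde{a}^{(2)}_d$, the plan is to rewrite it as a single inner product on $S_{2n}$ and then apply the harmonic-analytic machinery of Lemma~\ref{lem:sd-exact}. First I would reparameterize: each of $(r^{-1},r^{-1})^{(\alpha,\beta)}$ and $(r,r)^{(\gamma,\delta)}$ is uniform over $N_R$, the $S_n \times S_n$-orbit of ``split'' $(n,n)$-cycle products inside the $S_{2n}$-conjugacy class $R_2 := [(r,r)]$. Because $d^{c(\cdot)}$ is a class function on $S_{2n}$, we may additionally average $w_k$ over its $S_n \times S_n$-orbit $O_k$ of matching involutions with exactly $k$ cross-transpositions, which has size $\binom{n}{k}^2 k!$; the weight $\binom{n}{k}^2$ then exactly cancels the factor $|O_k|/k!$. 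Writing $Y^w := w Y w$ and summing over $k$ and over the $\binom{2n}{n}$ ordered splits $(A_1,A_2)$ of $\{1,\dots,2n\}$---each $Z \in R_2$ corresponding to exactly two ordered splits (the two orderings of its cycle partition), and each such $Z$ arising as $Y^w$ in exactly $k!$ ways---yields
\[
\tilde{a}^{(2)}_d = \frac{2}{d^{2n}(n-1)!^4} \sum_{X \in N_R,\; Z \in R_2} d^{c(XZ)} \, .
\]

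The Fourier analysis would then proceed exactly as in Lemma~\ref{lem:sd-exact}. Expanding $d^{c(\cdot)}$ in $S_{2n}$-characters introduces the Kostka numbers $s_\lambda(1^d) = \inner{\chi_\lambda}{d^{c(\cdot)}}$ (counting semistandard Young tableaux of shape $\lambda$ with entries in $\{1,\dots,d\}$), as in the derivation of~\eqref{eq:chit-dcpi}. The central-sum identity $\sum_{Z \in R_2} \chi_\lambda(XZ) = \bigl(|R_2|\chi_\lambda((r,r))/\chi_\lambda(1)\bigr)\chi_\lambda(X)$, combined with $\sum_{X \in N_R}\chi_\lambda(X) = |N_R|\chi_\lambda((r,r))$ and the normalization $2|N_R||R_2|/(n-1)!^4 = (2n)!/(n!)^2 = \binom{2n}{n}$, produces
\[
\tilde{a}^{(2)}_d = \frac{\binom{2n}{n}}{d^{2n}} \sum_{\lambda \vdash 2n} \frac{\chi_\lambda((r,r))^2}{\chi_\lambda(1)}\, s_\lambda(1^d) \, .
\]
Every summand is nonnegative; the trivial character $\lambda = (2n)$ alone contributes $s_{(2n)}(1^d) = \binom{2n+d-1}{2n}$, which gives the lower bound. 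For the upper bound, column orthogonality gives $\sum_\lambda \chi_\lambda((r,r))^2 = |C_{S_{2n}}((r,r))| = 2n^2$, and the hook-content and hook-length formulas yield $s_\lambda(1^d)/\chi_\lambda(1) = \prod_{(i,j)\in\lambda}(d+j-i)/(2n)!$, maximized over $\lambda \vdash 2n$ at the one-row shape $\lambda=(2n)$ with value $\binom{2n+d-1}{2n}$ (by a one-cell swap argument relocating any cell in row $\ge 2$ to the right end of row~$1$); the stated constant $4n^2$ absorbs the resulting $2n^2$. The hard part will be the bookkeeping in the reparameterization step, where the $k!$-fold overcounting of $Z = Y^w$ and the factor $2$ from ordered splits must be tracked precisely; once the central identity is in hand, the Fourier analysis itself is routine given Lemma~\ref{lem:sd-exact}.
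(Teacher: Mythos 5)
Your proposal is correct and follows the paper's strategy in all essentials: the first pair of inequalities is the same termwise comparison using $1 \le \binom{n}{k} \le \binom{n}{\lfloor n/2\rfloor}$ (the paper simply says it is ``immediate from positivity''), and the heart of the argument is the same identity $\tilde{a}^{(2)}_d = \frac{\binom{2n}{n}}{d^{2n}}\,(2n)!\,\inner{P_{n,n}*P_{n,n}}{d^{c(\cdot)}}$ followed by a character expansion, with the lower bound coming from the trivial character alone. Two remarks on where you diverge. First, your reparameterization via orbit sums over $O_k$ and the $k!$/ordered-split overcounting does land on the right normalization ($2|N_R||R_2|/(n-1)!^4 = \binom{2n}{n}$ checks out), but the paper reaches the same identity more cleanly: $(\gamma,\delta)w_k(\alpha,\beta)$ is uniform on the double coset $\{\pi : |\pi^\uparrow|=k\}$, and $\Pr_{\pi}[|\pi^\uparrow|=k]=\binom{n}{k}^2/\binom{2n}{n}$, so the weights $\binom{n}{k}^2$ are exactly what is needed to reassemble a uniform $\pi \in S_{2n}$ --- no orbit bookkeeping required. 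Second, your upper bound is genuinely different from the paper's and slightly sharper: the paper uses the Murnaghan--Nakayama rule to show the convolution is supported on at most $n^2$ characters $\tau$ with $\chi_\tau(n,n)^2\le 4$, and bounds $K^\tau_\rho \le \dim\tau$; you instead keep all of $\hat{S}_{2n}$, use column orthogonality $\sum_\lambda \chi_\lambda((r,r))^2 = |C_{S_{2n}}((r,r))| = 2n^2$, and bound $s_\lambda(1^d)/\dim\lambda = \prod_{(i,j)\in\lambda}(d+j-i)/(2n)!$ by its value at the one-row shape. Both yield $O(n^2)\binom{2n+d-1}{2n}$ (yours with constant $2$ rather than $4$), and your route avoids classifying the support of $P_{n,n}$ entirely, at the price of needing the hook-content/hook-length formulas and the one-cell relocation argument (which is fine: when some factor $d+j-i$ vanishes the product is zero, and otherwise all factors are positive, so the relocation is monotone).
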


\noindent
Combining this with Lemmas~\ref{lem:symmetrized-second-diagram} and~\ref{lem:sd-exact} completes the proof of~\eqref{eq:sym-constant} and~\eqref{eq:sym-growing} in Theorem~\ref{thm:main}.

We return now to the proofs of these two lemmas.
\begin{proof}[Proof of Lemma~\ref{lem:symmetrized-second-diagram}]
Squaring~\eqref{eq:sdet-expansion}, the second moment of the symmetrized estimator can be written
\begin{equation}\label{eq:second-moment}
\Exp[\Xsym^2] = \sum_C \sum_{(\kappa,\lambda,\mu,\nu) \vdash C} 
% \left( \prod_{i=1}^n A_{i,\kappa i} A_{i,\lambda i} A_{i,\mu i} A_{i,\nu i} \right) \\
\, \Exp_{\alpha,\beta,\gamma,\delta} 
\Exp_{\{\rho_{ij}\}} \!
\left( \tr \prod_i \rho_{\alpha i, \kappa \alpha i} \right)
\left( \tr \prod_i \rho_{\beta i, \lambda \beta i} \right) 
\left( \tr \prod_i \rho^*_{\gamma i, \mu \gamma i} \right)
\left( \tr \prod_i \rho^*_{\delta i, \nu \delta i} \right) .
\end{equation}
Consider now a term of~\eqref{eq:second-moment} corresponding to a tuple $(\kappa, \lambda, \mu, \nu)$ of the form
\begin{equation}
\label{eq:basic-term}
\Exp_{\alpha,\beta,\gamma,\delta} \!
\left( \tr \prod_i \rho_{\alpha i, \kappa \alpha i} \right)
\left( \tr \prod_i \rho_{\beta i, \lambda \beta i} \right) 
\left( \tr \prod_i \rho^*_{\gamma i, \mu \gamma i} \right)
\left( \tr \prod_i \rho^*_{\delta i, \nu \delta i} \right)\, .
\end{equation}
In light of Lemma~\ref{lem:fourth} (cf.~\eqref{eq:decouple}), we may ``decouple'' any four appearances of the same $\rho_{ij}$, resulting in a sum of terms in which no $\rho$ appears more than twice. For this reason, we begin our analysis with the extra assumption that each $\rho_{ij}$ appears exactly twice. For notational convenience, let us temporarily refer to the $2n$ distinct $\rho_{ij}$ appearing in~\eqref{eq:basic-term} simply by
$\rho_1, \rho_2, \ldots, \rho_{2n},$ this list in the natural order given by $\kappa$ and $\lambda$ (e.g., $\rho_i = \rho_{i, \kappa i}$ and $\rho_{n + i} = \rho_{i, \lambda i}$ for $i \leq n$). For a tuple $(\alpha, \beta, \gamma, \delta)$, then, the cupcaps of Eq.~\eqref{eq:cupcap} introduce edges between conjugate appearances of the same $\rho_i$ as shown in Figure~\ref{fig:symmetrized}; any two indices attached by an edge are constrained to be equal.

\begin{figure}[ht]
\begin{center}
\parbox[b]{8cm}{\vfill
\subfigure[Cupcaps and rotations]{%
\label{fig:symmetrized}%
\includegraphics[width=8cm]{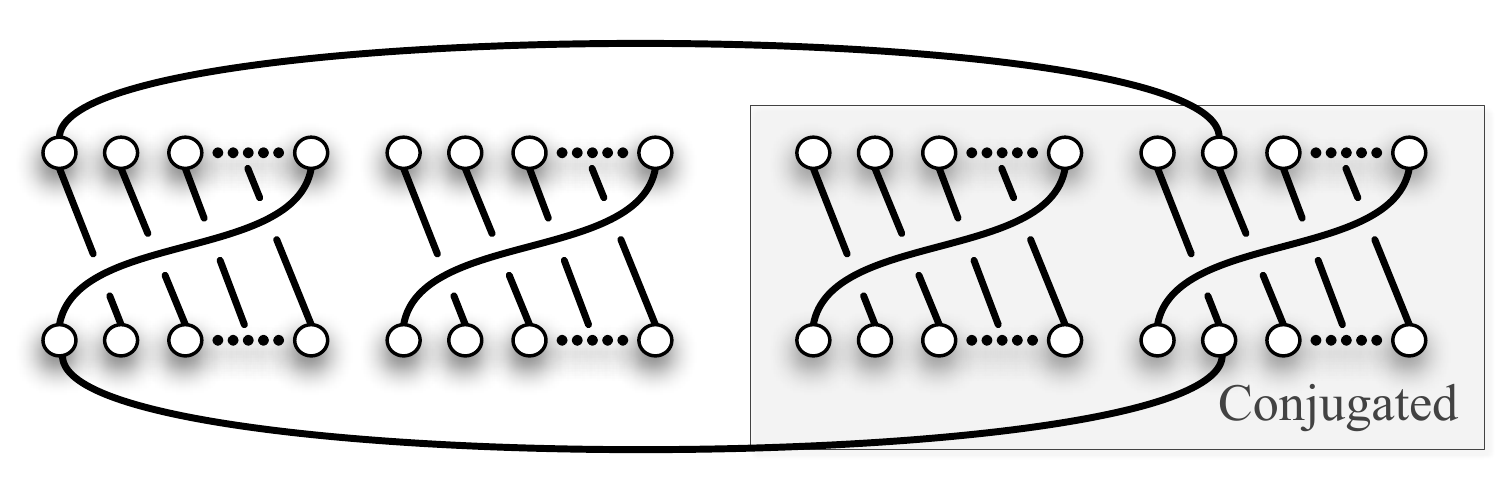}}\vfill}
\quad
\parbox[b]{8cm}{\vfill
\subfigure[Symmetrization induces conjugation]{%
\label{fig:contraction-conjugates}%
\includegraphics[width=8cm]{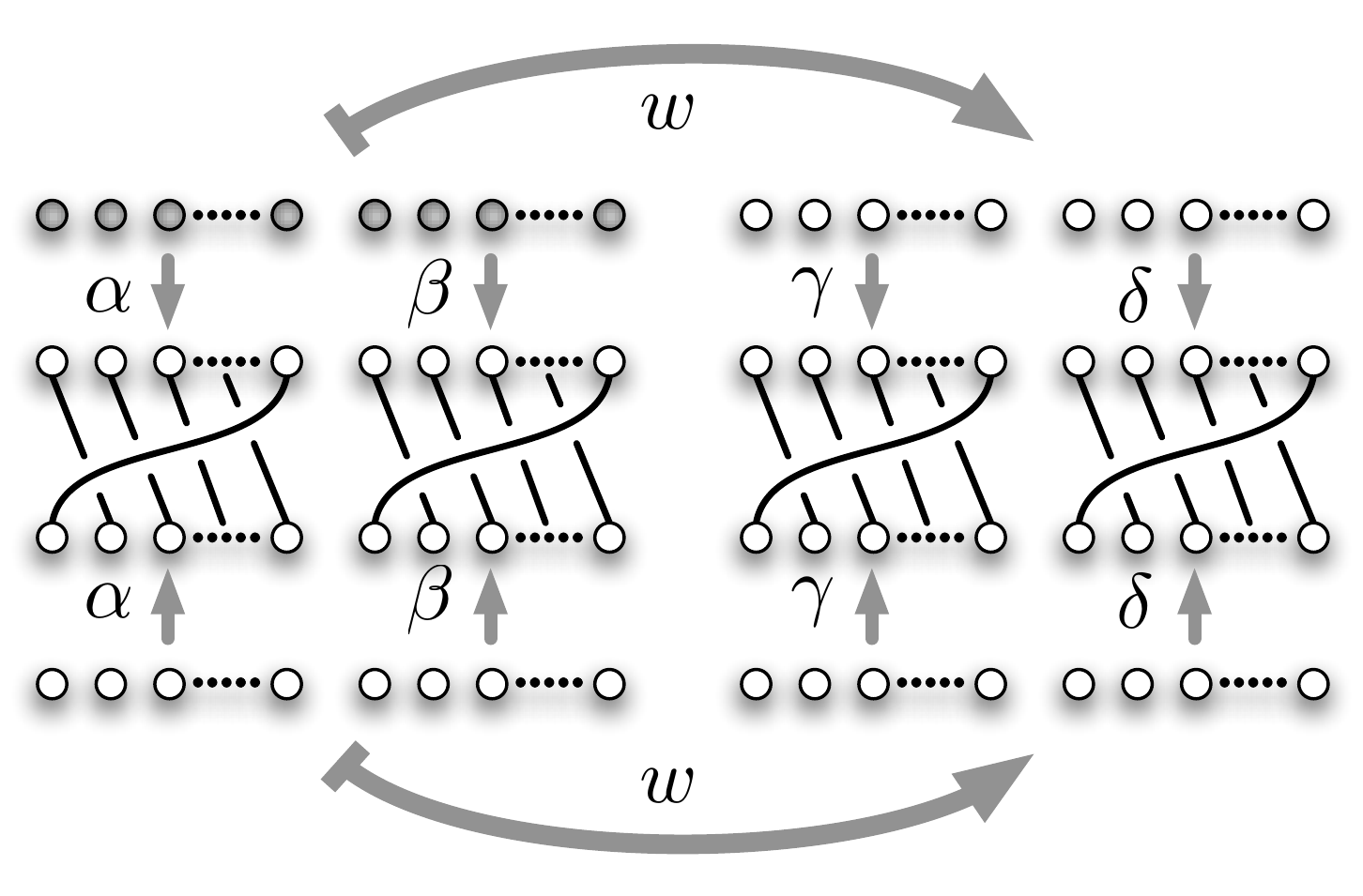}}\vfill}
\end{center}
\caption{Contractions in the second moment computation}
\label{fig:contraction}
\end{figure}

With this convention, the permutations $\mu$ and $\nu$ determine a permutation $w \in S_{2n}$ given by the ordering of the conjugate appearances of the $\rho_i$ (when $\alpha = \beta = \gamma = \delta = 1$). The contraction determined by $w$ and a particular $(\alpha, \beta, \gamma, \delta)$ is combinatorial in the sense that it merely constrains families of indices (among the $[\rho_i]^t_s$ and their conjugates) to be equal. Recalling that each cupcap contributes a factor of $1/d$ and each cycle permits $d$ different settings of the indices it contains, the value of this contraction is determined by the cycle structure of the permutation
$$
(r^{-1},r^{-1})^{(\alpha^{-1}, \beta^{-1})}\; w^{-1} \;(r, r)^{(\gamma, \delta)} \;w\, ;
$$
see Figure~\ref{fig:contraction-conjugates}. In particular, we may write the quantity of~\eqref{eq:basic-term} as
\[
\frac{1}{d^{2n}} \Exp_{\alpha, \beta, \gamma, \delta} d^{c\left( (r^{-1},r^{-1})^{(\alpha, \beta)} w^{-1} (r, r)^{(\gamma, \delta)} w \right)}\, ,
\]
where, as before, $c(\pi)$ denotes the number of cycles in the permutation $\pi$.

As we are interested in the expectation, over all rearrangements determined by $\alpha$, $\beta$, $\gamma$, and $\delta$, the only relevant feature of the permutation $w$ is
\begin{equation}
\label{eq:w_k}
k = k_{\kappa,\nu} = \Bigl| w\bigl(\{1, \ldots, n\} \cap \{ n+1, \ldots, 2n\}\bigr) \Bigr| = \Bigl|\bigl\{(i,\kappa(i))\bigr\} \cap \bigl\{(i, \nu(i))\bigr\}\Bigr|\, ,
\end{equation}
the number of $\sigma_i$ carried from the ``$\kappa$-block'' to the ``$\nu$-block.'' Defining $w_k = (1 \;n+1) \cdots (k\;n+k)$, we may rewrite the expectation of~\eqref{eq:basic-term} as
\[
\frac{1}{d^{2n}} \Exp_{\alpha, \beta, \gamma, \delta} d^{c\left( (r^{-1},r^{-1})^{(\alpha, \beta)} w_k (r,r)^{(\gamma, \delta)} w_k \right)}\, .
\]

As in Section~\ref{sec:second-unsym}, for a given double cycle cover $C$, a coloring $(\kappa, \lambda, \mu, \nu) \vdash C$ is determined by selecting, for each nontrivial cycle $c$ of $C$, whether $c$'s colors alternate between $(\kappa, \mu)$ and $(\lambda, \nu)$ or $(\kappa, \nu)$ and $(\lambda, \mu)$, and the parity of this coloring.  In light of the decoupling equation~\eqref{eq:decouple}, we may treat each isolated edge as an ``unordered pair'' of edges that can be colored in two possible ways, with $(\kappa, \mu)$ and $(\lambda, \nu)$ or $(\kappa, \nu)$ and $(\lambda, \mu)$.  Recall that in the case of Haar measure, this introduces a factor $1 + O(1/d)$ for each isolated edge, giving the factor $(1+O(1/d))^n$.

Observe now that the value of $k$ determined in Eq.~\eqref{eq:w_k} is unaffected by the choice of parity in a nontrivial cycle. The other choices described above (determining the colors involved in a nontrivial cycle or isolated edge) have the effect of exchanging a family of $\rho_{ij}$ in the $\mu$-block with a family in the $\nu$-block.

In particular, focusing on the portion of the second moment corresponding to a particular double cycle cover $C$, we have
\begin{align}
 \sum_{(\kappa,\lambda,\mu,\nu) \vdash C} 
&\, \Exp_{\alpha,\beta,\gamma,\delta} \!
\left( \tr \prod_i \rho_{\alpha i, \kappa \alpha i} \right)
\left( \tr \prod_i \rho_{\beta i, \lambda \beta i} \right) 
\left( \tr \prod_i \rho^*_{\gamma i, \mu \gamma i} \right)
\left( \tr \prod_i \rho^*_{\delta i, \nu \delta i} \right) \nonumber\\
&= \sum_{(\kappa,\lambda,\mu,\nu) \vdash C} 
\, \Exp_{\alpha,\beta,\gamma,\delta} d^{c\left( (r^{-1},r^{-1})^{(\alpha, \beta)} w_{\kappa,\nu} (r,r)^{(\gamma, \delta)} w_{\kappa,\nu} \right)} \label{eq:second-moment-w-actual} \\
%&\leq 2^{t(C)} \left(1 + O\!\left( \frac{1}{d} \right) \right)^{s(C)} \sum_{k=0}^n \binom{n}{k} \frac{1}{d^{2n}} \Exp_{\alpha, \beta, \gamma, \delta} d^{c\left( (r^{-1},r^{-1})^{(\alpha, \beta)} w_k (r,r)^{(\gamma, \delta)} w_k \right)} \\
%&= 2^{t(C)} \left(1 + O\!\left( \frac{1}{d} \right) \right)^{s(C)} a^{(2)}_d\, , 
&\leq \left(1 + O\!\left( \frac{1}{d} \right) \right)^n 2^{t(C)}  \sum_{k=0}^n \binom{n}{k} \frac{1}{d^{2n}} 
\Exp_{\alpha, \beta, \gamma, \delta} d^{c\left( (r^{-1},r^{-1})^{(\alpha, \beta)} w_k (r,r)^{(\gamma, \delta)} w_k \right)} \label{eq:second-moment-w-k}\\
&= \left(1 + O\!\left( \frac{1}{d} \right) \right)^{n} 2^{t(C)}  a^{(2)}_d \, .
\end{align}
Summing over all cycle covers $C \vdash A$ and applying~\eqref{eq:perm-cycle-sum} completes the proof.  For the Gaussian measure, the same proof applies without the factor $(1+O(1/d))^n$.
\end{proof}

We return to the proof of Lemma~\ref{lem:a-2-hat}.

\begin{proof}[Proof of Lemma~\ref{lem:a-2-hat}] The inequality
$$
\frac{1}{\binom{n}{n/2}} \tilde{a}_d^{(2)} \; \leq \; {a}_d^{(2)} \;\leq\; \tilde{a}_d^{(2)}
$$
is immediate from the fact that the terms of the sums defining these quantities are positive. We introduce some further notation: for a permutation $\pi \in S_{2n}$, we define 
\[
\pi^\uparrow = \bigl\{ i \mid i \in \{1, \ldots, n\}, \pi i \in \{ n+1, \ldots, 2n\} \bigr\} \quad \text{and} \quad \pi^\downarrow = \bigl\{ i \mid i \in \{ n+1, \ldots, 2n\}, \pi i \in \{ 1, \ldots, n\} \bigr\}\, .
\] 
Then $| \pi^\uparrow| = | \pi^\downarrow|$ and, if $\pi$ is selected uniformly in $S_{2n}$, 
$\Pr \bigl[|\pi^\uparrow| = k\bigr] = {\binom{n}{k}^2}/{\binom{2n}{n}}$. 
Observe also that if $\alpha$, $\beta$, $\gamma$, and $\delta$ are chosen uniformly from $S_n$, the element $(\gamma, \delta)w_k(\alpha, \beta)$ is uniform in the set $\{ \pi \mid |\pi^\uparrow| = k\}$. Recalling that $d^{c(\cdot)}$ is a class function, 
\begin{align}
\frac{1}{\binom{2n}{n}} \cdot  \tilde{a}^{(2)}_d &= \frac{1}{\binom{2n}{n}} \sum_k \binom{n}{k}^2 \frac{1}{d^{2n}} 
\Exp_{\alpha, \beta, \gamma, \delta} d^{c\left( (r^{-1},r^{-1})^{(\alpha, \beta)} w_k (r, r)^{(\gamma, \delta)} w_k \right)} 
\label{eq:symmetrized-class} \\
& =  \frac{1}{\binom{2n}{n}}  \sum_k \binom{n}{k}^2 \frac{1}{d^{2n}} 
\Exp_{\alpha, \beta, \gamma, \delta} 
d^{c\left( (r^{-1},r^{-1}) (\alpha, \beta)^{-1} w_{k} (\gamma, \delta)^{-1} (r, r) (\gamma, \delta) w_{k} (\alpha, \beta) \right)}
\nonumber\\
& =  \frac{1}{d^{2n}} \Exp_\pi d^{c\left( (r^{-1},r^{-1})  (r, r)^{\pi} \right)} 
= \frac{1}{d^{2n}}\Exp_\pi \Exp_\sigma d^{c\left( (r,r)^\sigma  (r, r)^{\pi} \right) } \, , \nonumber
\end{align}
where $\pi$ and $\sigma$ are chosen uniformly at random from $S_{2n}$.  Here we use the fact that any element of $S_{2n}$---in this case, $(r,r)$---is in the same conjugacy class as its inverse.

Defining $P_{n,n}$ to be the uniform distribution on the conjugacy class 
\[
[(r,r)] = \{ (r,r)^\pi \mid \pi \in S_{2n}\} \subset S_{2n}\,,
\]
we may express the quantity above as an inner product
\begin{equation}
\label{eq:a-2-d-inner}
\frac{1}{\binom{2n}{n}} \tilde{a}_d^{(2)} = \frac{1}{d^{2n}} (2n)! \;  \langle d^{c(\cdot)}, P_{n,n} * P_{n,n} \rangle \, .
\end{equation}

As in the proof of Lemma~\ref{lem:sd-exact}, we compute this inner product by determining the Fourier expansions of the class functions $d^{c(\cdot)}$ and $P_{n,n}$. By the Murnaghan-Nakayama rule, 
$\chi_{\lambda}(n,n)= 0$ unless the tableau $\lambda$ can be expressed as the union of two $n$-ribbon tiles. Any such tableau has \emph{rank} (the number of cells on the diagonal) no more than two and can be conveniently expressed in terms of its \emph{characteristics}: defining $a_i$ and $b_i$ to be the number of cells below and to the right of the $i$th box of the diagonal, respectively, we use the notation $\tau = (b_1, b_2, \ldots, b_r \mid a_1, a_2, \ldots, b_r)$ to describe the tableau (see Figure~\ref{fig:ribbons}).
\begin{figure}[ht]
\centering
\includegraphics[width=7cm]{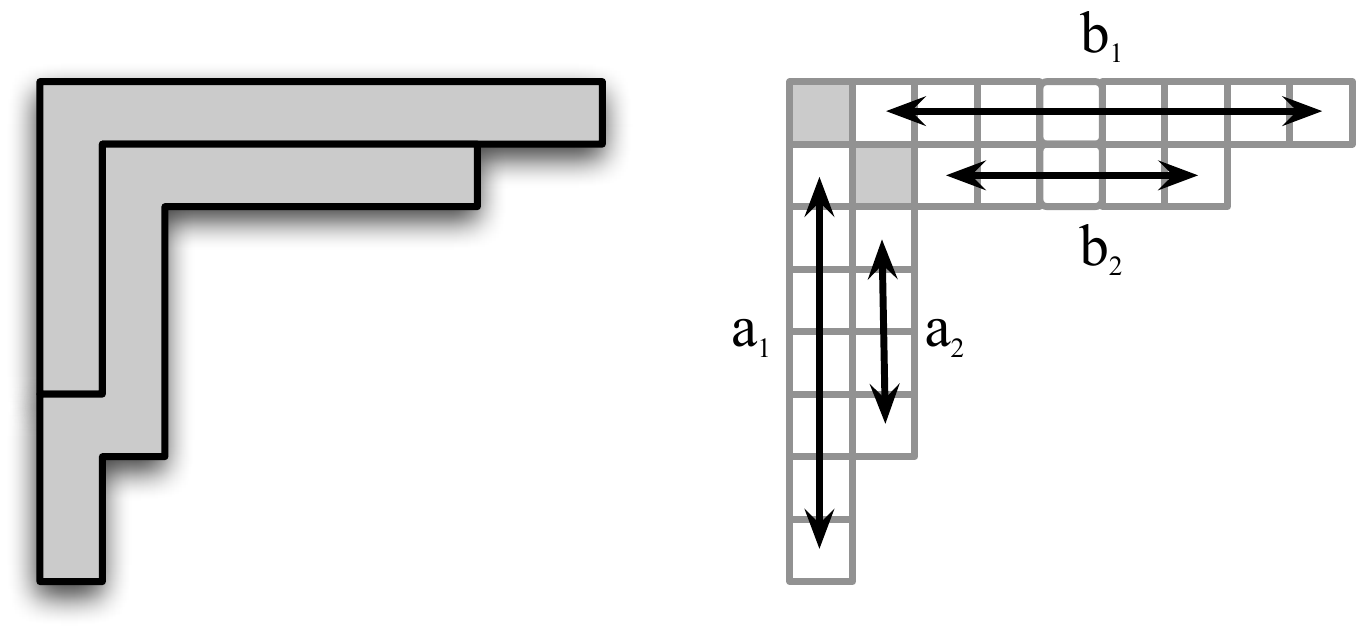}
\caption{A Young tableau decomposed into two $n$-ribbon tiles.}
\label{fig:ribbons}
\end{figure}
If $\chi_\tau(n,n)$ is nonzero, so that $\tau$ can be written as the union of two $n$-ribbons, we find (again appealing to the Murnaghan-Nakayama rule) that either
\begin{itemize}
\item $\tau = (b_1, b_2 \mid a_1, a_2)$ has rank two, $a_1 + b_2 + 1 = a_2 + b_1 + 1 = n$, and $\chi_\tau(n,n) = \pm 2$, or
\item $\tau = (b_1 \mid a_1)$ has rank one and $\chi_\tau(n,n) = \pm 1$.
\end{itemize}
We let $T_n$ denote the family of representations of $S_{2n}$ described above; note that $\abs{T_n} \le n^2$. Observe that for each $\tau \in T_n$, $\langle P_{n,n}, \chi_\tau\rangle = \frac{1}{(2n)!} \chi_\tau(n,n)$ (where $\chi_\tau(n,n) \in \{ \pm 1, \pm 2\}$). Recalling that
$$
\chi * \chi = \frac{|G|}{\chi(1)} \chi
$$
for any irreducible character $\chi$ of a group $G$, we may express
$$
\langle P_{n,n} * P_{n,n}, \chi_\tau\rangle = \frac{1}{(2n)!} \frac{\chi_\tau(n,n)^2}{\dim \tau}\,.
$$
As discussed in the proof of Lemma~\ref{lem:sd-exact},
$$
\inner{d^{c(\cdot)}}{\chi_\tau} = \inner{\chi_\Sigma}{\chi_\tau} = \sum_{\substack{(\rho_1, \ldots, \rho_d)\\\sum \rho_i = 2n}} K^\tau_\rho\, ,
$$
where $\chi_\Sigma$ is the permutation representation given by the action of $S_{2n}$ on the set $\{ (a_1, \ldots, a_{2n} \mid a_i \in \{1, \ldots, d\}\}$ and $K^\tau_\rho$ is the Kostka number, equal to the number of semistandard tableaux of shape $\tau$ with $\rho_i$ appearances of the number $i$. Then
\begin{equation}
\label{eq:a-2-d-inner-exact}
\langle P_{n,n} * P_{n,n}, d^{c(\cdot)}\rangle 
= \sum_\tau \langle P_{n,n} * P_{n,n}, \chi_\tau\rangle \inner{d^{c(\cdot)}}{\chi_\tau} 
= \frac{1}{(2n)!} \sum_{\tau \in T_n} \chi_\tau(n,n)^2 \sum_{\substack{(\rho_1, \ldots, \rho_d) \\ \sum \rho_i = 2n}} \frac{K^\tau_\rho}{\dim \tau} \, .
\end{equation}
Note that for each $\tau \in T_n$, $\chi_\tau(n,n)^2 \leq 4$ and $K^\tau_\rho \leq \dim \tau$, as $\dim \tau$ is the number of semistandard tableaux of shape $\tau$ with distinct entries in any totally ordered set. Thus,  
\begin{equation*}
\langle P_{n,n} * P_{n,n}, d^{c(\cdot)}\rangle 
\leq \frac{4}{(2n)!} \sum_{\tau \in T_n} \binom{2n + d - 1}{2n} 
\leq \frac{4}{(2n)!} \abs{T_n} \binom{2n + d - 1}{2n} 
\le \frac{4n^2}{(2n)!} \binom{2n + d - 1}{2n} \, .
\end{equation*}
On the other hand, each term in the sum of~\eqref{eq:a-2-d-inner-exact} is positive; thus
\begin{equation}
\label{eq:pnn-upper}
\langle P_{n,n} * P_{n,n}, d^{c(\cdot)}\rangle 
\geq \langle P_{n,n} * P_{n,n}, \chi_1\rangle \inner{d^{c(\cdot)}}{\chi_1} = \frac{1}{2n!} \binom{2n + d - 1}{2n}\,.
\end{equation}
We conclude that
$$
\frac{1}{2n!} \binom{2n + d - 1}{2n} \leq \langle P_{n,n} * P_{n,n}, d^{c(\cdot)}\rangle 
 \le \frac{4n^2}{2n!} \binom{2n + d - 1}{2n} 
$$
which, in conjunction with~\eqref{eq:a-2-d-inner}, completes the proof of Lemma~\ref{lem:a-2-hat}.
\end{proof}

Now we apply these Lemmas to prove an upper bound on the critical ratio $\Exp[\Xsym^2]/\Exp[\Xsym]^2$.  If $d$ is constant, which is the only case for which we have an efficient algorithm to compute $\Xsym$~\cite{barvinok-sdet}, our bound is not very inspiring.  If $n \ge d$, combining Lemmas~\ref{lem:sd-exact}, \ref{lem:symmetrized-second-diagram}, and \ref{lem:a-2-hat} gives
\begin{equation*}
%\label{eq:sym-bound-constant}
\frac{\Exp[\Xsym^2]}{\Exp[\Xsym]^2} 
\le 4n^2 \binom{2n}{n} \binom{2n + d - 1}{2n} \Big\slash {n+d \choose n+1}^2 
%\nonumber \\
%&= O(n^3/d) \frac{(2n + d)! \,d!}{(n+d)!^2} 
= O(n^3/d) {2n+2d \choose n+d} \Big\slash {2n+2d \choose d} 
%\nonumber \\
= 2^{2n} \,n^{-d+O(1)} \, ,
\end{equation*}
assuming that $d=O(1)$.  This proves~\eqref{eq:sym-constant} in Theorem~\ref{thm:main}, and suggests that $d$ needs to grow with $n$ to give a good estimator.  

On the other hand, when $d$ grows fast enough with $n$, we find that the critical ratio behaves quite well.  Combining Lemma~\ref{lem:a-2-hat} with the lower bound~\eqref{eq:sd-lower} gives
\begin{equation*}
%\label{eq:sym-bound-growing}
\frac{\Exp[\Xsym^2]}{\Exp[\Xsym]^2} 
\le \frac{4 n!^2}{d^{2n}} \binom{2n}{n} \binom{2n + d - 1}{2n} 
= \frac{4}{d^{2n}} \frac{(2n + d - 1)!}{(d-1)!} 
%\nonumber \\
= 4 \left(1+\frac{1}{d} \right) 
%\left(1+\frac{2}{d} \right) 
\cdots \left(1+\frac{2n-1}{d} \right) 
\le 4 \e^{4n^2/d} \, ,
\end{equation*}
completing the proof of~\eqref{eq:sym-growing} in Theorem~\ref{thm:main}.  

In the critical case where $d = O(1)$ the upper bound of $2^{2n}n^{-d+O(1)}$ we establish above is tight up to the factor introduced by our ``approximation'' of $a_d^{(2)}$ by $\tilde{a}_d^{(d)}$---that is, a factor of $\binom{n}{n/2}$. In particular, even for the identity matrix, we can establish a $2^n n^{-d+O(1)}$ lower bound on the critical ratio:

\begin{theorem*}[Restatement of Theorem~\ref{thm:lower}]
Let $A$ be the $n \times n$ identity matrix and $d$ a constant.  Then
\begin{equation*}
\frac{\Exp[\Xsym^2]}{\Exp[\Xsym]^2} = \Omega\left(\frac{2^n}{n^d}\right) 
\quad \text{and} \quad 
\frac{\Exp[\Xsym^2]}{\Exp[\Xsym]^2} = \left(1 - O\!\left(\frac{1}{d} \right)\right)^n \Omega\left(\frac{2^n}{n^d}\right)\,,
\end{equation*}
when the $\rho_{ij}$ are distributed according to the Gaussian or Haar measure respectively.
\end{theorem*}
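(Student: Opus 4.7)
My plan is to specialize the machinery built for the upper bound---in particular, Lemmas~\ref{lem:symmetrized-second-diagram} and~\ref{lem:a-2-hat}---to the case $A = \one$, where the expressions collapse dramatically. Since $\perm \one = 1$, the only permutation $\pi \vdash \one$ is the identity, so $\Exp[\Xsym] = a_d$ by~\eqref{eq:exp-sym}, giving $\Exp[\Xsym]^2 = a_d^2$. On the second-moment side, the only double cycle cover $C \vdash \one$ consists of $n$ isolated self-edges with $t(C) = 0$, and the sum over $(\kappa, \lambda, \mu, \nu) \vdash C$ collapses to the single tuple of identity permutations.

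Next I would revisit the decoupling argument in the proof of Lemma~\ref{lem:symmetrized-second-diagram} for this special case. Every $\rho_{ii}$ appears four times in~\eqref{eq:second-moment}, and Lemma~\ref{lem:fourth} replaces each fourth-order expectation by a sum of two pairings. Summing over the $2^n$ pairing patterns and collecting the $\binom{n}{k}$ ways to flip exactly $k$ pairings produces precisely the sum defining $a_d^{(2)}$. In the Gaussian measure, \eqref{eq:cupcap-diag-gaussian} is an equality, so $\Exp[\Xsym^2] = a_d^{(2)}$; in the Haar measure, the lower bound of~\eqref{eq:cupcap-diag-haar} gives $\Exp[\Xsym^2] \ge (1 - O(1/d))^n\, a_d^{(2)}$. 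Thus
\[
\frac{\Exp[\Xsym^2]}{\Exp[\Xsym]^2} \;\geq\; \left(1 - O(1/d)\right)^n \frac{a_d^{(2)}}{a_d^2}\,,
\]
with the $(1 - O(1/d))^n$ factor replaced by $1$ in the Gaussian case.

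To finish, I would combine the lower bound on $a_d^{(2)}$ from Lemma~\ref{lem:a-2-hat} with the exact expression for $a_d$ from Lemma~\ref{lem:sd-exact} (valid for $d \le n$):
\[
a_d^{(2)} \;\geq\; \frac{1}{\binom{n}{n/2}\,d^{2n}}\, \binom{2n}{n}\binom{2n+d-1}{2n}\,,\qquad a_d = \frac{1}{d^n}\binom{n+d}{n+1}\,.
\]
Taking the ratio and applying Stirling's approximation---specifically $\binom{2n}{n}/\binom{n}{n/2} = \Theta(2^n/\sqrt{n})$, together with $\binom{2n+d-1}{2n} = \Theta(n^{d-1})$ and $\binom{n+d}{n+1} = \Theta(n^{d-1})$ for constant $d$---yields
\[
\frac{a_d^{(2)}}{a_d^2} \;=\; \Omega\!\left(\frac{2^n}{n^{d-1/2}}\right) \;=\; \Omega\!\left(\frac{2^n}{n^d}\right),
\]
establishing both statements of the theorem.

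The crux of the argument, in my view, is recognizing that the $2^n$ in the conclusion arises from the combinatorial identity $\binom{2n}{n}/\binom{n}{n/2} = \Theta(2^n/\sqrt{n})$. This reflects the genuine gap between $a_d^{(2)}$ (summing $\binom{n}{k}$ over the decoupling patterns) and the more tractable $\tilde{a}_d^{(2)}$ (summing $\binom{n}{k}^2$). The potential worry is that the lower bound on $\tilde{a}_d^{(2)}$ in Lemma~\ref{lem:a-2-hat}, which keeps only the trivial character in the Fourier expansion of $P_{n,n} * P_{n,n}$, might lose too much; fortunately, the bound $\abs{T_n} \le n^2$ used in the matching upper bound shows that the trivial character already captures the right order up to a polynomial factor, which is absorbed into the $n^{-d}$ denominator. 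Once this is observed, the rest is bookkeeping with binomial coefficients.
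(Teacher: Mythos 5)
Your proposal is correct and follows essentially the same route as the paper's own proof: specialize to $A=\one$ so that the second moment collapses to the single identity tuple and equals $a_d^{(2)}$ exactly (Gaussian) or up to $(1-O(1/d))^n$ (Haar), then combine the $\binom{n}{n/2}$ relation between $a_d^{(2)}$ and $\tilde{a}_d^{(2)}$ from Lemma~\ref{lem:a-2-hat} with the exact value of $a_d$ and estimate the resulting ratio of binomial coefficients. Your bookkeeping (in particular $\binom{2n}{n}/\binom{n}{n/2}=\Theta(2^n)$ up to $\sqrt{n}$ factors and the $\Theta(n^{d-1})$ estimates for constant $d$) matches the paper's, so there is nothing further to add.
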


\begin{proof}[Proof of Theorem~\ref{thm:lower}]
Let $d$ be a constant and $A$ the $n \times n$ identity matrix. Then $\perm A= \perm^2 A = 1$ and, from Lemma~\ref{lem:sd-exact},
$$
\Exp[\Xsym] = a_d = \frac{1}{d^n} \binom{n+d}{n+1}\,.
$$
As for the second moment, the only nontrivial term in the sum~\eqref{eq:second-moment} corresponds to the case where the permutations $\kappa$, $\lambda$, $\mu$, and $\nu$ are the identity. In this case each $\rho_{ij}$ appears four times and there are precisely $\binom{n}{k}$ terms of~\eqref{eq:second-moment-w-actual} for which $w_{\kappa,\lambda} = w_k$; in particular, in this case the inequality of~\eqref{eq:second-moment-w-k} is an equality. Recalling Lemma~\ref{lem:fourth}, we conclude that
$$
\Exp[\Xsym^2] = a_d^{(2)} \quad \text{and} \quad \Exp[\Xsym^2] \geq (1 - O(1/d))^n\; a_d^{(2)}
$$
when the $\rho_{ij}$ have Gaussian measure and Haar measure, respectively. For constant $d$ we have
$$
\frac{\tilde{a}_{d}^{(2)}}{a_d^2} \geq \frac{{a}_{d}^{(2)}}{\binom{n}{n/2} a_d^2} = \frac{\binom{2n}{n}\binom{2n+d-1}{2n}}{\binom{n}{n/2} \binom{n+d}{n+1}^2}
$$
and, considering that $\binom{\ell}{\ell/2} = \frac{2^\ell}{\Theta(\sqrt{\ell})}$ and $\binom{2n + d - 1}{2n} \geq \binom{n+d}{n+1}$, 
$$
\frac{\tilde{a}_d^{(2)}}{a_d^2} = \frac{2^{2n}}{2^n O(\sqrt{n}) \binom{n+d}{n+1}} = \Omega\left(\frac{2^n}{n^d}\right)\,.
$$
The statement of the theorem follows.
\end{proof}

\section{Estimators based on the Frobenius norm}
\label{sec:frob}

In this section, we prove Theorem~\ref{thm:frob} by relating the moments of Frobenius estimators, $\Xfrobunsym=\norm{\det M}^2$ and $\Xfrobsym=\norm{\sdet M}^2$, to those of the trace-squared estimators we studied above.

As Fig.~\ref{fig:frob} shows, the diagrams corresponding to the expectations and second moments of these estimators differ from those of their counterparts by a small number of local moves.  Let $Q$ be the product of some sequence of $\rho_{ij}$.  Then all we have to do is change our previous contraction,
\[
\abs{\tr Q}^2 = Q^i_i Q^j_j 
\]
where the ``output'' of each $Q$ is connected to its ``input,'' to
\[
\norm{Q}^2 = \tr Q Q^\dagger = Q^i_j (Q^\dagger)^j_i = Q^i_j (Q^*)^i_j \, . 
\]
In this contraction, we connect the output of each $Q$ to the output of the corresponding $Q^*$, and similarly wire their inputs together.  The cupcaps, resulting from taking the expectation of $\rho \otimes \rho^*$ for each $\rho_{ij}$ appearing in these products, remain the same as before.

\begin{figure}
\begin{center}
\includegraphics[width=10cm]{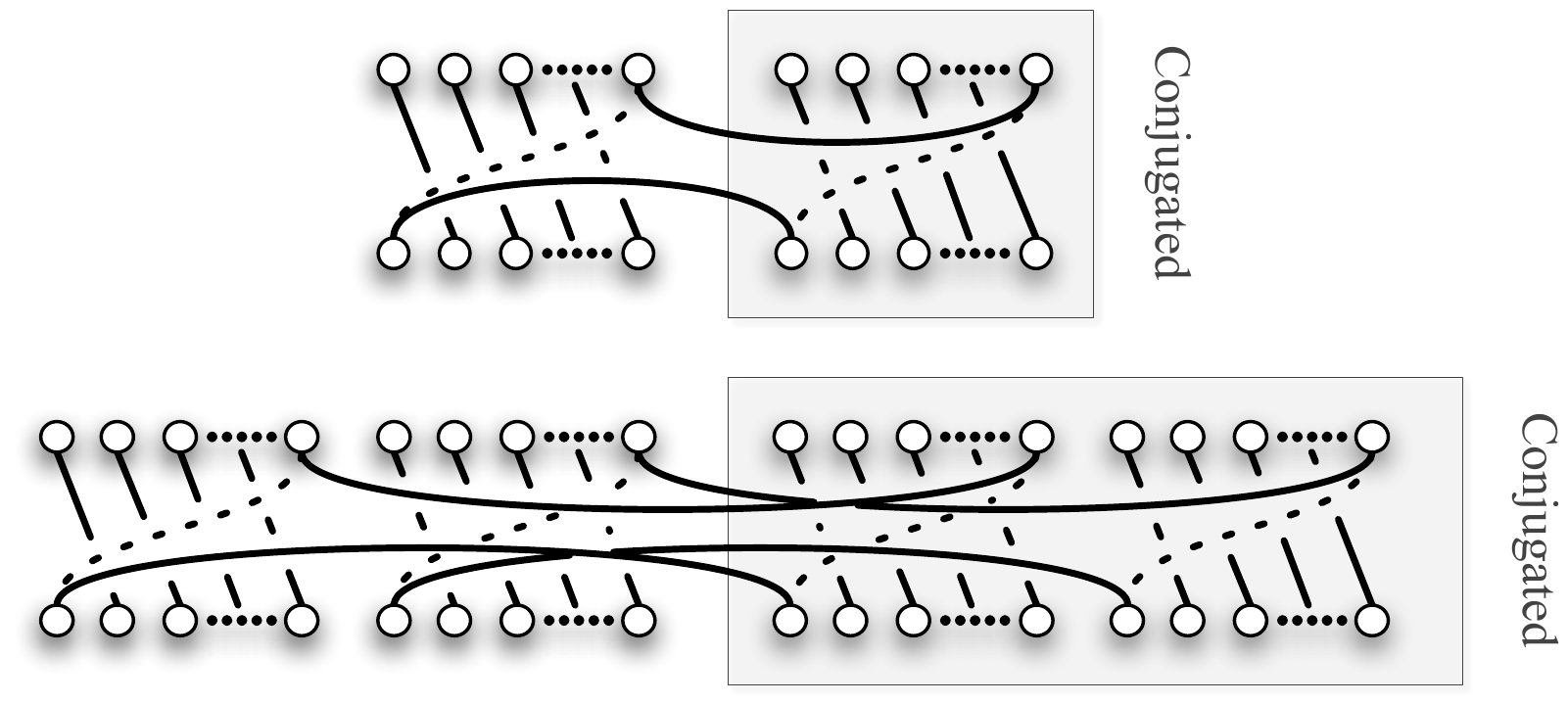}
\end{center}
\caption{Rewiring the diagram to change $\abs{\tr M}^2$ to $\tr M M^\dagger = \norm{M}^2$.  The cupcaps remain unchanged, but instead of wiring the ``input'' of each product to its ``output,'' we wire a pair of products together ``input'' to ``input'' and ``output'' to ``output.''}
\label{fig:frob}
\end{figure}

Now recall that the expectation and second moment of these estimators is proportional to $d^c$, where $c$ is the number of loops in these diagrams.  Each of these rewiring moves changes the number of loops by at most one, by cutting one loop into two or merging two loops into one.  Thus we have 
\[
\frac{1}{d} \Exp[\Xunsym] \le \Exp[\Xfrobunsym] \le d \,\Exp[\Xunsym]
\quad \text{and} \quad
\frac{1}{d^2} \Exp[\Xunsym^2] \le \Exp[\Xfrobunsym^2] \le d^2 \,\Exp[\Xunsym^2] \, ,
\]
and similarly in the symmetrized case.  Assuming the worst regarding these bounds yields~\eqref{eq:frob}, and completes the proof of Theorem~\ref{thm:main}.

\section{Conclusions}
\label{sec:conclusion}

As we stated in the Introduction, our results present us with the following irony.  For the estimators based on the unsymmetrized determinant, which we do not know how to compute efficiently, the critical ratio $\Exp[X^2]/\Exp[X]^2$ becomes more mildly exponential as $d$ increases.  Specifically, for any $\eps > 0$ we can make the critical ratio $O((1+\eps)^n)$ by taking $d=1/\eps$.  

On the other hand, for the estimators based on the symmetrized determinant, the critical ratio is $\Omega(2^n)$ in the case $d=O(1)$ where we have an efficient algorithm.  In order to reduce this exponential to $O(c^n)$ for some $c < 2$, we need $d$ to be a growing function of $n$.  This is contrary to the intuition expressed in~\cite{barvinok-sdet}, and to our own initial intuition when we began work on this problem.  

\begin{sloppypar}
Of course, the symmetrized estimators may still be tightly concentrated, as conjectured in~\cite{barvinok-sdet}.  However, since their variance is large, any proof of concentration would have to bound, implicitly or explicitly, their higher moments.
\end{sloppypar}

At this point, finding an algebraic polynomial-time approximation scheme for the permanent seems to require progress on at least one of several fronts.  One approach would be to seek a polynomial-time algorithm for $\sdet M$ in the case where $M$'s entries belong to $\alg_d$ where $d = \poly(n)$, but it seems difficult to scale up the algorithm of~\cite{barvinok-sdet} beyond $d=O(1)$.  Another approach, as suggested in~\cite{chien-rasmussen-sinclair}, would be to seek an algorithm for $\det M$ where $M$'s entries belong to some group with representations of arbitrarily high dimension.  However, it seems difficult to construct a succinct description for the group algebra elements which appear in the determinant, since their support in the group basis is exponentially large.  

\section*{Acknowledgments}

This research was supported by NSF grants CCF-0524613, CCF-0835735, and CCF-0829917.

\newcommand{\etalchar}[1]{$^{#1}$}

\appendix

\section{Representation theory and the symmetric group}
\label{sec:representation-theory}

We briefly discuss the elements of the representation theory of groups, and of the symmetric groups in particular. Our treatment is primarily for the purposes of setting down notation; we refer the reader to~\cite{JamesK1981} for a complete account.

Let $G$ be a finite group. A \emph{representation} $\rho$ of $G$ is a
homomorphism $\rho: G \to \U(V)$, where $V$ is a finite-dimensional
Hilbert space and $\U(V)$ is the group of unitary operators on $V$.
The \emph{dimension} of $\rho$, denoted $d_\rho$, is the dimension of the
vector space $V$.  By choosing a basis for $V$, then, we can identify each $\rho(g)$ 
with a unitary $d_\rho \times d_\rho$ matrix; these matrices then satisfy $\rho(gh) = \rho(g) \cdot \rho(h)$  for every $g, h \in G$.

Fixing a representation $\rho: G \to \U(V)$, we say that a subspace $W \subset
V$ is \emph{invariant} if $\rho(g) W \subset W$ for all $g \in G$.
%; observe that in this case the restriction $\rho_W: G \to \U(W)$, given by
%restricting each $\rho(g)$ to $W$, is also a representation. 
We say $\rho$ is \emph{irreducible} if it has no invariant subspaces other than 
the trivial space $\{ \vec{0} \}$ and $V$.  
If two representations $\rho$ and $\sigma$ are the same up to a unitary
change of basis, we say that they are \emph{equivalent}. It is a fact
that any finite group $G$ has a finite number of distinct irreducible
representations up to equivalence and, for a group $G$, we let
$\hat{G}$ denote a set of representations containing exactly one
from each equivalence class. The irreducible representations of $G$
give rise to the Fourier transform.  Specifically, for a function $f: G
\to \C$ and an element $\rho \in \hat{G}$, define the \emph{Fourier
  transform of $f$ at $\rho$} to be
$$
\hat{f}(\rho) = \sqrt{\frac{d_\rho}{|G|}} \sum_{g \in G} f(g)\rho(g)\enspace.
$$
The leading coefficients are chosen to make the transform unitary,
so that it preserves inner products:
$$
\langle f_1, f_2\rangle = \sum_g f_1^*(g)f_2(g) 
= \sum_{\rho \in \hat{G}}\tr \!\left(\hat{f_1}(\rho)^\dagger \cdot \hat{f_2}(\rho)\right)\enspace.
$$

In the case when $\rho$ is \emph{not} irreducible, 
it can be decomposed into a direct sum of irreducible representations, each one of which 
operates on an invariant subspace.  We write $\rho = \sigma_1 \oplus \cdots \oplus \sigma_k$ 
and, for the $\sigma_i$ appearing at least once in this decomposition, $\sigma_i \prec \rho$.
In general, a given $\sigma$ can appear multiple times, 
in the sense that $\rho$ can have an invariant subspace isomorphic to
the direct sum of $a^\rho_\sigma$ copies of $\sigma$.  In this case $a^\rho_\sigma$ is
called the \emph{multiplicity} of $\sigma$ in $\rho$, and we write $\rho =
\bigoplus_{\sigma \prec \rho} a^\rho_\sigma \sigma$.

For a representation $\rho$ we define its \emph{character} as the trace
%, to be the function $\chi_\rho: G \to \C$ given by
$\chi_\rho(g) = \tr \rho(g)$.  
%Given an element $m \in G$, we denote its conjugacy class 
%$\mathfrak{m} = \{ gmg^{-1} \mid g \in G\}$.  
Given an element $m$, we denote its conjugacy class $[m]=\{ g^{-1} m g \mid g \in G\}$.  
Since the trace is invariant
under conjugation, characters are constant on the conjugacy classes, and we write
$\chi_\rho([m]) = \chi_\rho(m)$ where $m$ is any element of $[m]$. 
Characters are a powerful tool for reasoning about
the decomposition of reducible representations.  In particular, for
$\rho, \sigma \in \hat{G }$, we have the orthogonality conditions
$$
\langle \chi_\rho, \chi_\sigma \rangle_{G} = \frac{1}{|G|} \sum_{g \in G} \chi_\rho(g)\chi_\sigma(g)^*
= \begin{cases} 1 & \rho = \sigma\enspace,\\
  0 & \rho \neq \sigma\enspace. \end{cases}
$$
If $\rho$ is reducible, we have $\chi_\rho = \sum_{\sigma \prec \rho} a^\rho_\sigma \chi_{\sigma_i}$, and so the multiplicity $a^\rho_\sigma$ is given by 
\[ a^\rho_\sigma = \langle \chi_\rho, \chi_\sigma \rangle_{G} \enspace . \]

If $\rho$ is irreducible, \emph{Schur's lemma} asserts that the only matrices which commute 
with $\rho(g)$ for all $g$ are the scalars, $\{ c \one \mid c \in \C \}$.  Therefore, for any $A$ 
we have
\begin{equation}
\label{eq:a}
 \frac{1}{|G|} \sum_{g \in G} \rho(g)^\dagger A \rho(g) = \frac{\tr A}{d_\rho} \one_{d_\rho}  
\end{equation}
since conjugating this sum by $\rho(g)$ simply permutes its terms.

We specialize now to the case of the symmetric group $S_n$ of permutations of the set $\{ 1, \ldots, n\}$. The representations of $S_n$ are in one-to-one correspondence with \emph{Young diagrams} or, equivalently, integer partitions $\lambda = (\lambda_1, \lambda_2, \cdots)$ where $\lambda_1 \geq \lambda_2 \geq \cdots$ and $\sum_i \lambda_i = n$. The character of this representation is denoted $\chi_\lambda$. The \emph{Murnaghan-Nakayama rule} gives a recursive formula for the character $\chi_\lambda$. In preparation for stating the rule, we define a \emph{ribbon tile} of length $k$ to be a polyomino of $k$ cells, arranged in a path where each step is up or to the right.

\begin{lemma}[Murnaghan-Nakayama rule]
\label{lem:mn}
 Given a Young diagram $\lambda$ and a permutation $\pi$ with cycle structure $k_1 \ge k_2 \ge \cdots$, a \emph{consistent tiling} of $\lambda$ consists of removing a ribbon tile of length $k_1$ from the boundary of $\lambda$, then one of length $k_2$, and so on, with the requirement that the remaining part of $\lambda$ is a Young diagram at each step.  Let $h_i$ denote the height of the ribbon tile corresponding to the $i$th cycle: then
\begin{equation}
\label{eq:m-n}
 \chi_\lambda(\pi) = \sum_T \prod_i (-1)^{h_i+1} 
\end{equation}
where the sum is over all consistent tilings $T$.
\end{lemma}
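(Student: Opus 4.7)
The plan is to prove the Murnaghan--Nakayama rule by translating the statement into the ring $\Lambda$ of symmetric functions, where the computation of an irreducible character becomes a routine expansion of a product of power sums in the Schur basis. Recall that $\Lambda$ carries two distinguished bases indexed by partitions of $n$: the Schur functions $s_\lambda$ and the power-sum products $p_\mu = p_{\mu_1} p_{\mu_2} \cdots$. Under the Frobenius characteristic map $\mathrm{ch}\colon \bigoplus_n R(S_n) \to \Lambda$, the irreducible character $\chi_\lambda$ maps to $s_\lambda$, and the normalized indicator of the conjugacy class of cycle type $\mu$ maps to $p_\mu$. The key consequence is the identity
\[
\chi_\lambda(\mu) = \langle s_\lambda, p_\mu\rangle
\]
under the Hall inner product on $\Lambda$, so the character value we want to compute is simply the Schur coefficient of a product of power sums.

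The main step is to establish the following Pieri-type identity for multiplication by a single power sum:
\[
p_k \cdot s_\nu = \sum_{\mu} (-1)^{\mathrm{ht}(\mu/\nu)} s_\mu,
\]
where $\mu$ ranges over Young diagrams obtained from $\nu$ by attaching a ribbon tile of length $k$ and $\mathrm{ht}(\mu/\nu)$ is the height (rows minus one) of that ribbon. I would prove this either algebraically, using the Jacobi--Trudi determinantal identity $s_\nu = \det(h_{\nu_i - i + j})$ together with Newton's identity expressing $p_k$ in terms of the $h_j$ and $e_j$, followed by careful row manipulations in the resulting determinant; or combinatorially, via a sign-reversing involution on pairs consisting of a semistandard tableau and an added border strip.

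Given this Pieri-type identity, the lemma follows by induction on the number of cycles of $\pi$. Writing the cycle type of $\pi$ as $(k_1, k_2, \ldots)$ with $k_1 \ge k_2 \ge \cdots$, we have $\chi_\lambda(\pi) = \langle s_\lambda, p_{k_1} p_{k_2} \cdots \rangle$. Starting from $s_\emptyset = 1$ and applying the Pieri rule to successively multiply by $p_{k_1}, p_{k_2}, \ldots$ expands this product as a signed sum of Schur functions, where each step chooses a ribbon of the prescribed length to attach and contributes a sign $(-1)^{h+1}$. After all cycles are processed, the coefficient of $s_\lambda$ is precisely $\sum_T \prod_i (-1)^{h_i+1}$, with the sum running over sequences of ribbon \emph{additions} starting from the empty diagram and terminating at $\lambda$. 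Reversing time and replacing each addition by the corresponding removal puts this in exactly the form stated in Lemma~\ref{lem:mn}.

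The main obstacle will be the Pieri-type identity itself: both the determinantal and bijective proofs require delicate sign bookkeeping, matching the ribbon-height sign against the parity coming either from row permutations in the Jacobi--Trudi determinant or from the fixed-point structure of the involution. Once this local identity is in place, the inductive passage to the global Murnaghan--Nakayama formula is essentially clerical.
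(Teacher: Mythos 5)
Your outline is the standard symmetric-function proof of the Murnaghan--Nakayama rule, and it is essentially correct. Note, however, that the paper does not prove this lemma at all: it appears in Appendix A as a classical fact, with the reader referred to James and Kerber for a complete account, so there is no in-paper argument to compare against. Your reduction of $\chi_\lambda(\mu)$ to the Hall inner product $\langle s_\lambda, p_\mu \rangle$ via the Frobenius characteristic is right; your sign convention is consistent with the paper's (a ribbon occupying $h$ rows contributes $(-1)^{h+1}$ in the paper's notation and $(-1)^{\mathrm{ht}}=(-1)^{h-1}$ in yours, and these agree); and the final time-reversal from ribbon additions to removals is legitimate because the $p_{k_i}$ commute, so the product $p_{k_1}p_{k_2}\cdots$ can be expanded in whichever order reproduces the paper's convention of removing the longest ribbon first. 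The one substantive gap is that the entire content of the theorem is concentrated in the single-power-sum Pieri identity $p_k \, s_\nu = \sum_\mu (-1)^{\mathrm{ht}(\mu/\nu)} s_\mu$, which you state but defer: you name two viable strategies (Jacobi--Trudi with determinant manipulations, or a sign-reversing involution) and carry out neither, so the sign bookkeeping you correctly flag as the delicate point is never actually done. As a plan this is sound and matches the textbook route; as a self-contained proof it is incomplete until that identity is established.
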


\section{Proof of Lemma~\ref{lem:cupcap}}
\label{sec:cupcap}

\begin{proof}
For the Gaussian measure, this is simply the fact that 
$\left( \sigma \otimes \sigma^* \right)^{ik}_{j\ell} 
= \sigma^i_j (\sigma^k_\ell)^*$.  
If $i \ne k$ or $j \ne \ell$, then this is the product of two independent random variables both of whom have expectation zero.  If $i=k$ and $j=\ell$, then this is $\abs{ \sigma^i_j }^2$, whose expectation is $1/d$.

For the Haar measure, \eqref{eq:cupcap-coord} follows from a little representation theory.  (For a brief introduction to representation theory, see Appendix~\ref{sec:representation-theory}.)  Abusing notation, suppose that $\sigma$ is the defining representation of the group $\U(d)$ of unitary matrices, i.e., the $d$-dimensional representation in which unitary matrices act on column vectors in the natural way.  Then $\sigma \otimes \sigma^*$ is isomorphic to the conjugation action of $\U(d)$ on $\GL(d)$, the vector space of $d \times d$ matrices.  We can decompose this into the direct sum of two invariant subspaces % as follows, 
$\sigma \otimes \sigma^* \cong \one \oplus \Gamma$, 
where $\one$ is the trivial representation, consisting of the scalar matrices, and $\Gamma$ is the $(d^2-1)$-dimensional representation consisting of $d \times d$ matrices with zero trace.  Both these subspaces are clearly invariant under conjugation, and are, in fact, irreducible.  
%: no nontrivial subspace of $\Gamma$ is invariant.  
Taking the expectation over $\sigma \in \U(d)$ gives the projection operator $\Pi^{\sigma \otimes \sigma^*}_\one$ onto the trivial subspace---that is, the linear operator on the space of matrices which takes a matrix $A = A_{ik}$ and returns a scalar whose trace is $\tr A$.  We claim that this operator is exactly~\eqref{eq:cupcap-coord}, since 
\[
\left( \frac{1}{d} \,\delta^{ik} \delta_{j\ell} \right) A_{ik} = \frac{1}{d} \,A^i_i \delta_{j\ell} = \left( \frac{1}{d} \,\tr A \right) \one \, . 
\]
Here we again use the Einstein summation convention, so that $A^i_i = \tr A$, and the identity matrix is $\one=\delta_{j\ell}$.
\end{proof}

\end{document}